\newtheorem{corollary}{Corollary}
\newtheorem{theorem}{\bf Theorem}
\newtheorem{proposition}{\bf Proposition}
\newtheorem{lemma}{\bf Lemma}
\newtheorem{definition}{\bf Definition}
\newtheorem{remark}{Remark}
\newcommand{\splitatcommas}[1]{%
	\begingroup
	\ifnum\mathcode`,="8000
	\else
	\begingroup\lccode`~=`, \lowercase{\endgroup
		\edef~{\mathchar\the\mathcode`, \penalty0 \noexpand\hspace{0pt plus 1em}}%
	}\mathcode`,="8000
	\fi
	#1%
	\endgroup
}
\newcommand\semihuge{\@setfontsize\semihuge{22.3}{22}}
\begin{document}
	%
	\title{{Colonel Blotto Game for Secure State Estimation in Interdependent Critical Infrastructure}	\vspace{-0.2cm}}
	\IEEEoverridecommandlockouts
	\author{\IEEEauthorblockN{Aidin Ferdowsi\IEEEauthorrefmark{1}, Walid Saad\IEEEauthorrefmark{1}, and Narayan B. Mandayam\IEEEauthorrefmark{2}\\}
		\IEEEauthorblockA{\IEEEauthorrefmark{1}
			Wireless@VT, Bradley Department of Electrical and Computer Engineering, \\ Virginia Tech, Blacksburg, VA, USA,
			Emails: \{aidin,walids\}@vt.edu\\}
		\IEEEauthorblockA{\IEEEauthorrefmark{2}
			WINLAB, Dept. of ECE, Rutgers University, New Brunswick, NJ, USA,		Email:  narayan@winlab.rutgers.edu}\vspace{-1cm}
		\thanks{This research was supported by the U.S. National Science Foundation under Grants OAC-1541105, OAC-1541069, and EAGER-1745829.\vspace{-0.5cm}}
	}
	\maketitle
	

	%
	\IEEEpeerreviewmaketitle
	
	\begin{abstract}
		Securing the physical components of a city's interdependent critical infrastructure (ICI) such as power, natural gas, and water systems is a challenging task due to their interdependence and a large number of involved sensors. In this paper, using a novel integrated state-space model that captures the interdependence, a two-stage cyber attack on an ICI is studied in which the attacker first compromises the ICI's sensors by decoding their messages, and, subsequently, it alters the compromised sensors' data to cause state estimation errors. To thwart such attacks, the administrator of each critical infrastructure (CI) must assign protection levels to the sensors based on their importance in the state estimation process. To capture the interdependence between the attacker and the ICI administrator's actions and analyze their interactions, a Colonel Blotto game framework is proposed. The mixed-strategy Nash equilibrium of this game is derived analytically. At this equilibrium, it is shown that the administrator can strategically randomize between the protection levels of the sensors to deceive the attacker. Simulation results coupled with theoretical analysis show that, using the proposed game, the administrator can reduce the state estimation error by at least $ 50\% $ compared to {a non-strategic approach that assigns protection levels proportional to sensor values}.	\end{abstract}	 \vspace{-4mm}
	\section{Introduction}\vspace{-1mm}
	The services delivered by a smart city's critical infrastructure (CI) such as power, natural gas, and water will be highly interdependent \cite{Chattopadhyay,farhangi2010path,rinaldi2001identifying, nam2011smart}. CIs are cyber-physical systems (CPSs) that encompass physical infrastructure whose performance is monitored and controlled by a cyber system, typically consisting of a massive number of sensors. These CPSs exhibit close interactions between their cyber and physical components\cite{nam2011smart,ferdowsi2017deep,ouyang2014review}. {The different types of interdependencies inside and between CPSs include: 1) Physical, in which a CPS's state depends on the output of another CPS, 2) Cyber, a CPS's state depends on the received information from another CPS, 3) Policy-related, where the administrative decisions impact the CPSs, 4) Shared, in which CPSs share some components, and 5) Exclusive, in which only one CPS can work at a time while other CPSs must wait for the CPS to finish operating. Thus, the various interdependencies within CPSs require having precise state estimation for monitoring purposes\cite{ouyang2014review,Rana2017,Kamel1996}.}\vspace{-3mm}
	
	\subsection{Previous Works}\vspace{-2mm}
	The state estimation of the CIs, which uses cyber elements to monitor the physical elements, is a crucial stage for controlling their functionality. However, the interdependency between CIs and the high synergy between their physical and cyber components make them vulnerable to attacks and failures \cite{Sinopoli2011integ,yagan2012optimal,ferdowsi2017game}. Numerous solutions have been presented for securing state estimation of CPSs as well as for CI failure detection and identification \cite{pasqualetti2013attack,fawzi2014secure,kwon2013security,mo2010false}. In \cite{pasqualetti2013attack}, the authors presented a control-theoretic approach for attack detection and identification in noiseless environments using centralized and distributed attack detection filters. The works in \cite{fawzi2014secure,kwon2013security,mo2010false} considered the estimation of a CPS under stealthy deception and replay cyber-attacks using a Kalman filter (KF). Moreover, the security of \emph{interdependent critical infrastructure (ICI)} has been studied in recent works such as \cite{ouyang2015resilience,nan2017quantitative,chang2014toward}. In \cite{ouyang2015resilience}, the authors assessed the security of interdependent power and natural gas CIs under multiple hazards, considering the ICI's performance as a measurement for security. In \cite{nan2017quantitative}, the authors proposed an agent-based model to capture the effects of interdependencies and quantify the coupling strength within ICIs. Also, the impact of natural and human-included disasters has been studied in \cite{chang2014toward}. Furthermore, the security and protection of sensor networks, which collect data from CIs has been studied in \cite{szefer2012physical,ashok2014cyber,ferdowsi2018deep}. In \cite{szefer2012physical}, the authors proposed a novel method for physical attack protection with human virtualization in the context of data centers using sensors that detect an impending physical/human attack and, then, alarm to mitigate the attack. The work in \cite{ashok2014cyber} proposed a distributed observer for state estimation of CIs in lossy sensor networks with cyber attacks. The authors in \cite{ferdowsi2018deep} proposed a deep learning algorithm to authenticate vulnerable sensors in an Internet of Things network.  
	
	The works in {\cite{gupta2014three,ferdowsi2017colonel,schwartz2014heterogeneous,GCBG2018,kovenock2015generalizations}} used a Colonel Blotto game (CBG) to study the interactions between a CPS's defender and an attacker that seeks to compromise the CPS components. The CBG captures the competitive interactions between two players that seek to allocate resources across a set of battlefields. The player who allocates more resources to a certain battlefield wins it and receives a corresponding valuation. In \cite{gupta2014three}, a three stage CBG has been proposed to analyze the interaction of an attacker with two defenders. The work in \cite{ferdowsi2017colonel} studied the resilience of smart cities against cyber attacks using a CBG framework. In addition, many variants of the CBG have been studied including those	with symmetric resources \cite{kovenock2015generalizations}, heterogeneous resources \cite{schwartz2014heterogeneous}, and approximate winning-losing setting \cite{GCBG2018}.
	
	However, the works in {\cite{pasqualetti2013attack,fawzi2014secure,kwon2013security,mo2010false,ouyang2015resilience,nan2017quantitative,chang2014toward,szefer2012physical,ashok2014cyber,ferdowsi2018deep}} do not consider the limitations of the available security resources for the protection, detection, and identification of CI attacks. For instance, in practical smart cities, resource limitations may substantially affect the security of the CIs. Indeed, because of massive data transmission from sensors to the central processing unit, security solutions such as in \cite{pasqualetti2013attack}, \cite{fawzi2014secure}, and \cite{ferdowsi2018deep} will require a large number of computations, a high communication bandwidth, a large amount of power, and
	a considerable level of financial resources, all of which constitute limited resources for the ICI's administrator. Therefore, unlike the idealized security solutions in {\cite{pasqualetti2013attack,fawzi2014secure,kwon2013security,mo2010false,ouyang2015resilience,nan2017quantitative,chang2014toward,szefer2012physical,ashok2014cyber,ferdowsi2018deep}}, due to resource limitations, the administrator of an ICI has to prioritize between the protection of the cyber components of ICI based on their importance in the state estimation process \cite{Rullo2017}. Another key limitation in the current literature is that the majority of the existing works, such as \cite{pasqualetti2013attack,fawzi2014secure,kwon2013security,mo2010false} and {\cite{gupta2014three,ferdowsi2017colonel,schwartz2014heterogeneous,kovenock2015generalizations,GCBG2018}} do not take into account the interdependence between the CPSs. Meanwhile, those that account for interdependencies such as in \cite{ouyang2015resilience,nan2017quantitative,chang2014toward} are mostly based on graph-theoretic constructs that abstract much of the functionalities of the CIs. In practice, the CIs are interdependent and cannot be simply captured by a graph. 
\vspace{-3mm}
\subsection{Contributions}\vspace{-2mm}
	The main contribution of this paper is a novel game-theoretic framework for analyzing and optimizing the security of a large-scale ICI's state estimation. To build this unified security framework, this paper makes several contributions:{
	\vspace{-0.1cm}
	\begin{itemize}
		\item We first introduce a novel integrated state-space model that captures the dynamics of an ICI consisting of power, natural gas, and water distribution systems. {To the best of our knowledge, this is the first model that mathematically captures the interdependence of these three CIs.} For enabling state estimation of the proposed ICI dynamics model, we implement a centralized KF that uses the sensor data to estimate the ICI's state. 
		\item {We derive the maximum state estimation error deviation on the ICI caused by} a two-stage cyber attack that targets the sensors of the ICI so as to manipulate the state estimation. {In essence, in the first stage, the attacker aims to compromise the ICI sensors} by breaking their protection algorithm (e.g., watermarking or sensor attack detection filter). { We address this attack stage by assigning protection levels on the sensors which are derived from a game-theoretic analysis.} In the second stage, the attacker manipulates the ICI's state estimation by altering the compromised sensors' data to induce state estimation errors. {To defend against the second stage of the attack and protect the sensors, we implement an attack detection filter based on a Kullback-Leibler (KL) divergence. Using the KL divergence we can derive a maximum cumulative state estimation error deviation caused by manipulating any sensor in the ICI. This is a notable result since it enables the defender to distinguish most valuable sensors and protect the ICI accordingly. }
		\item Since the actions of the attacker and the defender are interdependent, we propose a \emph{Colonel Blotto game} framework\cite{Roberson2006} to analyze the interactions between the attacker and the administrator. In this game, the attacker chooses the set of sensors to compromise while the administrator assigns protection levels to the sensors. In contrast to existing works on Colonel Blotto for CPS security {\cite{gupta2014three,ferdowsi2017colonel,schwartz2014heterogeneous,GCBG2018,kovenock2015generalizations}}, our game considers the interdependence between multiple CPSs. For this game, we derive the mixed-strategy Nash equilibrium for the administrator and the attacker as a function of their available resources and the maximum state estimation error due to the attack.
	\end{itemize}}

Extensive simulations are used to corroborate the theoretical findings. Simulation results show that the administrator's mixed strategy increases the security of large-scale ICIs and reduces the state estimation error of the ICI by at least a factor of $ 50\% $ compared to a baseline.\vspace{-3mm}
\section{Interdependent Critical Infrastructure and Attack Model} \label{CImodel}\vspace{-2mm}
	Consider an ICI as a CPS whose \emph{physical system} consists of three interdependent power, natural gas, and water distribution CIs and whose \emph{cyber system} is a network of sensors that collect data from the physical components of the CIs and transmit it to a central processing unit. We first derive a state-space model for the physical system of each CI separately and then present the general ICI model. {The detailed derivations can be found in \textbf{Appendix \ref{ICIAppendix}}.} Finally, we discuss the associated cyber system and its vulnerability to attacks.\vspace{-3mm}
	\subsection{Physical System}\vspace{-2mm}
	The power system can be modeled as a linear dynamic system whose inputs are the electrical power demands from the load buses \cite{andersson2012dynamics}. We focus only on generators that are supplied by natural gas \cite{lubega2014quantitative} and we consider water as a requirement for the vapor condensation and cooling down in some of the generators\cite{maffezzoni1997boiler}. The natural gas and water CIs are designed to supply natural gas and water to consumers in a city. Due to the pressure loss at the junctions of these two CIs, gas compressors \cite{alamian2012state} and water pumps\cite{Burgschweiger2009} are used to compensate the pressure loss. { Fig. \ref{fig:ICIexample} shows a block diagram of such an ICI model.}
	Therefore, we can write a state-space model for the interdependent critical gas-power-water infrastructure:
	\begin{align}\label{eq:interCI}
	\dot{\boldsymbol{x}}(t)={\bar{\boldsymbol{A}}}\boldsymbol{x}+{\bar{\boldsymbol{B}}}\boldsymbol{u}(t),
	\end{align}
	{where $ \boldsymbol{x}(t) \in \mathbb{R}^{n} $ is an $ {n\times 1} $ vector and $ \boldsymbol{u}(t) \in \mathbb{R}^{\tilde{n}} $ is an $ {\tilde{n}\times 1} $ vector. $ \boldsymbol{x}(t)$ is the vector that contains the all state variables such as the power flows in the power CI and pipe pressure in water and natural gas CI and $ \boldsymbol{u}(t) $ contains the power, gas, and water demands from the end-users. Moreover, $ {\bar{\boldsymbol{A}}} \in \mathbb{R}^{n\times n} $ is an $ {n\times n} $ matrix and $ {\bar{\boldsymbol{B}}} \in \mathbb{R}^{n\times \tilde{n}} $ is an $ {n\times \tilde{n}} $. $ {\bar{\boldsymbol{A}}} $ and they are the representation of the interdependencies between the ICI's state variables. In addition, $ n $ is the total number of states in the ICI and $ \tilde{n} $ is the total number of power demands from the ICI.} {For ease of exposition, an extensive explanation of the interdependence and our derivation of all the state space model matrices for the interdependence of CIs \emph{\textbf{are summarized in Appendix \ref{ICIAppendix}}}}.

\begin{figure}[!t]
	\centering
	\captionsetup{singlelinecheck = false, justification=justified}
	\includegraphics[width=0.45\columnwidth]{./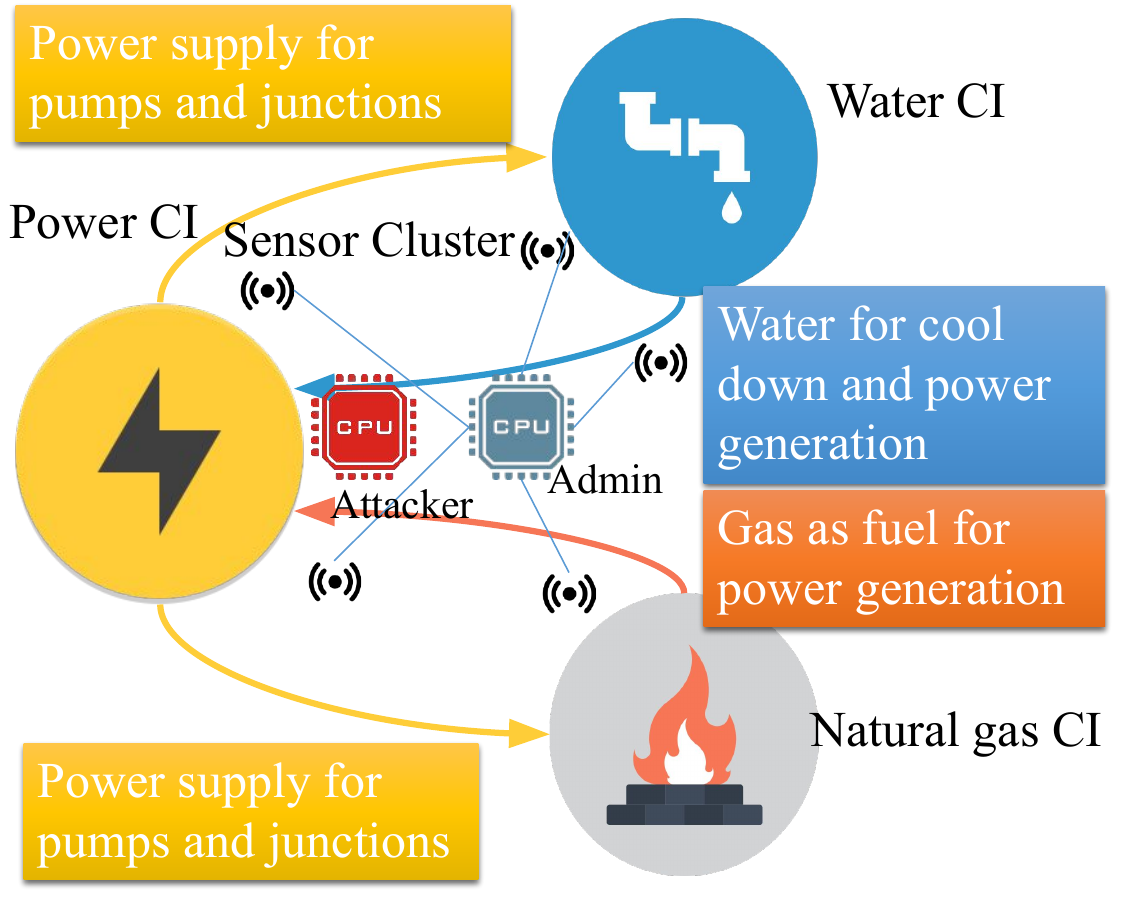}
	\vspace{-0.2cm}
	\caption{{An illustrative example of an ICI.}}
	\label{fig:ICIexample}
	\vspace{-0.7cm}
\end{figure}

\vspace{-3mm}
\subsection{Cyber System}\vspace{-2mm}
To monitor the state variables in \eqref{eq:interCI}, a cyber system is needed. For the considered ICI , the cyber system will consist of a number of sensors spread around the ICI and collecting different measurements from the ICI's components. Sensors and meters in the power infrastructure measure the instantaneous frequency of the generator, the mechanical input power to the generator, and the line powers between the generators. In the natural gas and water CI, sensors collect the outlet pressure, and inlet flow rate of each pipeline. As shown in Fig. \ref{fig:ICIexample}, we consider a sensor network that is used to collect data from the ICI and send it to a central server. The sensor data collected from each CI can be expressed as a linear equation of the states of the ICI, as follows:
	\begin{align}\label{eq:sensor}
	\boldsymbol{y}(t)=\bar{\boldsymbol{C}}\boldsymbol{x}(t),
	\end{align}  
	where $ \boldsymbol{x}(t) $ is given in \eqref{eq:interCI}, and $ \boldsymbol{y}(t) \in \mathbb{R}^{p\times 1} $ is a $ p\times 1 $ vector of all the sensor data at each time instant,  $\bar{\boldsymbol{C}} \in \mathbb{R}^{p\times n} $ is a $ {p\times n} $ matrix for converting the states to the sensor data and $ p $ is the total number of sensors in ICI. However, due to the inaccuracy in measurements and the process noise in the infrastructure as well as possibility of $ {\bar{\boldsymbol{C}}} $ not being full rank, the owner of each CI must estimate the system state at each time instant. Due to the interdependence between the CIs, their owners have to share the collected data from the components with a single administrator who has access to the ICI model \cite{zanella2014internet}. Note that a lack of cooperation between the owners of the CIs can yield estimation error since the administrator will not be able to capture the interdependencies.  {While \eqref{eq:interCI} and \eqref{eq:sensor} capture the physical and cyber behavior of the ICI, however, they do not consider the process and measurement noise, and also the discrete sensor data. Thus, here, we transform the continuous state-space model equations and the sensor outputs to a discrete model using a bilinear transformation \cite{williams2007linear}}:
	\begin{equation}\label{discreteICI}
	\begin{aligned}
	\boldsymbol{x}(k+1)&=\boldsymbol{A}\boldsymbol{x}(k)+\boldsymbol{B}\boldsymbol{u}(k)+\boldsymbol{w}(k),\\
	\boldsymbol{y}(k)&=\boldsymbol{C}\boldsymbol{x}(k)+\boldsymbol{l}(k),
	\end{aligned}
	\end{equation}
	{where $ \boldsymbol{A} \in \mathbb{R}^{n\times n} $, $ \boldsymbol{B} \in\mathbb{R}^{n\times \tilde{n}} $, and $ \boldsymbol{C}\in\mathbb{R}^{p\times n} $ are $ {n\times n} $, $ {n\times \tilde{n}} $, and $ {p\times n} $ matrices and are discretized versions of the matrices} defined in \eqref{eq:interCI} and \eqref{eq:sensor}, $ \boldsymbol{x}(k) \in \mathbb{R}^{n} $ is the $ {n\times 1} $ vector of state variables of the ICI at time step $ k $, $ \boldsymbol{u}(k) \in \mathbb{R}^{p} $ is the $ {\tilde{n}\times 1} $ vector of external inputs of the ICI at time step $ k $, $ \boldsymbol{w}(k) \in \mathbb{R}^{n} $ is the $ {n\times 1} $ vector of process noise at time $ k $, and $ \boldsymbol{l}(k) \in \mathbb{R}^{p} $ is the $ p\times 1 $ vector of measurement noise at time $ k $. Due to the discrete sensor data, hereinafter, we use \eqref{discreteICI} in our analysis which is the discrete model for the ICI. {\emph{Note that, we transform the $ (\bar{\boldsymbol{A}},\bar{\boldsymbol{B}}, \bar{\boldsymbol{C}}) $ matrices into the discrete form  $ (\boldsymbol{A},\boldsymbol{B}, \boldsymbol{C}) $ using a bilinear transformation\cite{williams2007linear}.}} In addition, $ \boldsymbol{x}(0) $ is the initial state of the ICI, and $ \boldsymbol{w}(k) $, $ \boldsymbol{l}(k) $ and $ x(0) $ are independent Gaussian random variables with $ x(0) \sim \mathcal{N}(\boldsymbol{0},\boldsymbol{\Psi})$, $ \boldsymbol{w}(k) \sim \mathcal{N}(\boldsymbol{0},\boldsymbol{\Phi})$, and $ \boldsymbol{l}(k) \sim \mathcal{N}(\boldsymbol{0},\boldsymbol{\Omega})$ { where $ \boldsymbol{\Psi} \in \mathbb{R}^{n\times n} $, $ \boldsymbol{\Phi} \in \mathbb{R}^{n\times n} $, and $ \boldsymbol{\Omega} \in \mathbb{R}^{p\times p} $ are ${n\times n}$, ${n\times n} $, and $ {p\times p} $ matrices, respectively.}
	
	The ICI administrator seeks to estimate the state of the ICI using \eqref{discreteICI}. However, due to sensor error and operation noise, a noise-resilient method is needed to estimate the state variables. To this end, it can be shown that by using a KF, one can compute the state estimation $ \hat{\boldsymbol{x}}(k) $ from observations $ \boldsymbol{y}(k) $\cite{kumar2015stochastic}. Since the initial time of the ICI is considered $ -\infty $, the KF converges to a fixed gain linear estimator. To find the state estimate of the system, we first compute the $ {n\times n} $ Kalman state probability matrix $ \boldsymbol{P} \in \mathbb{R}^{n\times n} $ as $
	\boldsymbol{P}= \boldsymbol{A}\boldsymbol{P}\boldsymbol{A}^T\hspace{-0.1cm}+\boldsymbol{\Phi}-\boldsymbol{A}\boldsymbol{P}\boldsymbol{C}^T\hspace{-0.1cm}\left(\boldsymbol{C}\boldsymbol{P}\boldsymbol{C}^T+\boldsymbol{\Omega}\right)^{-1}\hspace{-0.3cm}\boldsymbol{C}\boldsymbol{P}\boldsymbol{A}^T.$
	Then, we compute the $ {n\times p} $ Kalman fixed gain matrix as follows $
	\boldsymbol{K} =\boldsymbol{P}\boldsymbol{C}^T\left(\boldsymbol{C}\boldsymbol{P}\boldsymbol{C}^T+\boldsymbol{\Omega}\right)^{-1}.$
	Next, we find the state estimation vector at time $ k $ dependent to knowing the state estimation at time $ k-1 $, $ \hat{\boldsymbol{x}}(k|k-1) $ as $
	\hat{\boldsymbol{x}}(k|k-1)=\boldsymbol{A}\hat{\boldsymbol{x}}(k-1|k-1)+\boldsymbol{B}\boldsymbol{u}(k-1).$
	Finally, we compute the state estimate $ \hat{\boldsymbol{x}}(k|k) $ using a KF:
	\begin{align}\label{estimate}
	\hat{\boldsymbol{x}}(k|k)=(\boldsymbol{I}-\boldsymbol{K}\boldsymbol{C})\hat{\boldsymbol{x}}(k|k-1)+\boldsymbol{K}\boldsymbol{y}(k),
	\end{align}
	where the initial state is defined as $ \hat{\boldsymbol{x}}(0|0)=\boldsymbol{x}(0) $. {The initial state of the KF for a linear system only affects the convergence rate but the final estimation will not change. Therefore, irrespective of how the initial state is selected, the KF can converge to the optimal estimation \cite{williams2007linear}. In our case, we initialize the ICI state based on our knowledge about the steady state operation. Another commonly used initialization method for a KF is assigning zero values to all of the states. We define the \emph{estimation error} as the difference between the state $ \boldsymbol{x}(k) $ and its estimate $ \hat{\boldsymbol{x}}(k) $:
	\begin{equation}\label{errdef}
	\boldsymbol{e}(k) \triangleq \boldsymbol{x}(k) - \hat{\boldsymbol{x}}(k),
	\end{equation}
	{where $ \boldsymbol{e}(k) \in \mathbb{R}^{n} $ is an $ {n\times 1} $ vector. }
	Using \eqref{estimate} and \eqref{errdef}, we have:
	\begin{align}
	\boldsymbol{e}(k+1)&=(\boldsymbol{A}-\boldsymbol{K}\boldsymbol{C}\boldsymbol{A})\boldsymbol{e}(k)+(\boldsymbol{I}-\boldsymbol{K}\boldsymbol{C})\boldsymbol{w}(k)
	-\boldsymbol{K}\boldsymbol{l}(k)\nonumber\\
	&+\boldsymbol{K}\boldsymbol{C}\boldsymbol{B}\left(\boldsymbol{u}(k)-(\boldsymbol{A}-\boldsymbol{K}\boldsymbol{C}\boldsymbol{A})\boldsymbol{u}(k-1)\right),
	\end{align}
	We also define the residue of the KF:
	$
	\boldsymbol{z}(k)\triangleq\boldsymbol{y}(k)-\boldsymbol{C}\boldsymbol{A}\hat{\boldsymbol{x}}(k),
	$
	{where $ \boldsymbol{z}(k) \in \mathbb{R}^{p\times 1} $ is a $ {p\times 1} $ vector.} Because of process and measurement noise, we need to validate the estimation of the states and detect the failure of the estimation filter. We use $ \mathcal{X}^2 $ failure detector allowing the detector computes the following value at each time step \cite{chen2012robust}:
	\begin{align}\label{failuredetect}
	g(k)= \boldsymbol{z}^T(k)\boldsymbol{\mathcal{Z}}^{-1}\boldsymbol{z}(k),
	\end{align}
	where $ \boldsymbol{\mathcal{Z}} { \in \mathbb{R}^{p \times p}} $ {is a $ p\times p $ semi-positive definite matrix and is the relative cost of residue vector $ \boldsymbol{z}(k) $}. If $ g(k) $ exceeds the threshold level, then the detector will trigger an alarm.\vspace{-4mm}
	\subsection{Attack Model}
	Consider the cyber system of the ICI in Fig. \ref{fig:ICIexample}, where sensors collect measurements from the physical components of the ICI and transmit the measurement data to a \emph{central node} in their proximity. Then different central nodes will transmit the data to a central \emph{server} that will calculate the estimation of ICI state variables using the presented KF in \eqref{estimate}. We refer to the group of sensors which connect to a single central node, as a \emph{sensor cluster} (SC). We consider a two-stage attack model to the cyber system of our ICI. In the first stage, the attacker aims to compromise the ICI's sensors by breaking the security solution that is implemented by the ICI administrator (referred to as the \emph{defender}, hereinafter). After compromising some of the SCs, in the second stage, the attacker manipulates the SC data to increase the ICI state estimation error. 
	
	Our model can be used to capture any ICI security solution that can include a watermarking of the sensor data\cite{ferdowsi2018deep}, an implementation of attack detection filter\cite{pasqualetti2013attack}, or a physical protection of the sensors\cite{szefer2012physical}. Therefore, to compromise any SC within the ICI, the attacker has to collect the broadcast data from the sensors to the central nodes and compromise the implemented security solution. However, this requires processing of the collected data from the sensors across the ICI, physical presence of the attacker in the proximity of the central nodes to collect data, or communication resources for transmission of the collected data to the attacker's central processing unit. Since processing, communication and human resources are limited, the attacker needs to prioritize between the sensors based on their importance in the state estimation of the ICI. From the defender's point of view, implementing the aforementioned security solutions, requires computational resources, communication bandwidth, or financial resources which are restricted in availability for the defender. Therefore, the defender must also prioritize between the ICI sensors that it seeks to protect. Thus, to initiate this two-stage attack, the attacker must have information about the implemented security protocols at SCs, the defender's available security resources, and the ICI cyber-physical model. This attack model has been widely used in the literature \cite{pasqualetti2013attack,fawzi2014secure,kwon2013security} and is appropriate to consider because using this assumption we secure the ICI against the most capable attacker and, thus, for any other attackers with less capabilities, the ICI will still be secure. The defender's required information is the available resources of the attacker, i.e., how strong is the attacker. The defender can always assume a worst-case or average value for the attacker's available resources based on typical attackers’ capabilities, past attacks, or known data on similar attacks\cite{pasqualetti2013attack,fawzi2014secure,kwon2013security}.
	
	In summary, the attacker aims to maximize the state estimation error through the compromised sensors and the defender seeks to protect the SCs of the ICI from this cyber attack, under strict resource limitations at both sides. To analyze this interactions between the attacker and the defender, first, we study the second stage of attack to find the maximum estimation error caused by the cyber attack and quantify the importance of each SC in the ICI, then using these values we can formally analyze the attacker-defender interaction and derive optimal defense strategies.\vspace{-3mm}
	\section{Maximum State Estimation Error in the Compromised Sensors}\label{CED}\vspace{-2mm}
	In this section, we analyze the impact of the second stage of the cyber attack in order to quantify the ability of an attacker to increase the estimation error by altering the sensor data. We assume a worst-case scenario for security analysis in which the attacker has complete knowledge about the system as done in \cite{pasqualetti2013attack} and was able to compromise some of the SCs in the first stage. We assume that the attacker can change the data of the compromised sensors to a desired value in order to disturb the ICI's state estimation. Given the set of all compromised SCs,  $ \mathcal{A} $, we define attack vector at time step $ k $, $ \boldsymbol{y}^a(k) \triangleq [{\boldsymbol{y}_1^a}^T(k) ,\dots, {\boldsymbol{y}_N^a}^T(k)]^T $ where $ N $ is the number of SCs, and $ {\boldsymbol{y}_i^a} $ is the $1\times N_i$ attack vector on SC $ i $ where $ N_i $ is the number of sensors in SC $ i $. Also, $ \boldsymbol{y}_i(k) = \boldsymbol{0} $ if $ i \notin \mathcal{A} $. Therefore, the linear relationship between the state variables of the ICI and the sensor data under attack will be $
	\boldsymbol{\bar{y}}(k)=\boldsymbol{C}\boldsymbol{x}(k)+\boldsymbol{l}(k)+\boldsymbol{y}^a(k),$
	where $ \boldsymbol{\bar{y}}(k) $ is the vector of sensor measurements under attack, and $ \boldsymbol{y}^a(k) $ is independent from $ \boldsymbol{w}(k) $, $ \boldsymbol{v}(k) $, and  $ \boldsymbol{x}(0) $. Here, we assume that the attack on the sensors starts from $ k=1 $. When the ICI's cyber system is under attack, the Kalman state estimation filter of the ICI in \eqref{estimate} changes as follows:
	\begin{align}\label{estimateunderattack}
	\bar{\boldsymbol{x}}(k|k-1)=\boldsymbol{A}\bar{x}(k-1|k-1)+\boldsymbol{B}\boldsymbol{u}(k-1),\,\,
	\bar{\boldsymbol{x}}(k|k)=(\boldsymbol{I}-\boldsymbol{K}\boldsymbol{C})\bar{\boldsymbol{x}}(k|k-1)+\boldsymbol{K}\bar{\boldsymbol{y}}(k),
	\end{align}
	where $ \bar{\boldsymbol{x}}(k|k) $ is the estimate of the states under attack. The new residue and estimation error are defined as $
	\bar{\boldsymbol{z}}(k)\triangleq\bar{\boldsymbol{y}}(k)-\boldsymbol{C}\boldsymbol{A}\bar{\boldsymbol{x}}(k-1)$ and 
	$\bar{\boldsymbol{e}}(k)\triangleq \boldsymbol{x}(k)-\bar{\boldsymbol{x}}(k).$
	We define the error difference at the ICI state estimation between the under attack and in absence of attack as $
	\Delta \boldsymbol{e}(k)\triangleq\bar{\boldsymbol{e}}(k) -\boldsymbol{e}(k),
	\Delta \boldsymbol{z}(k) \triangleq\bar{\boldsymbol{z}}(k)-\boldsymbol{z}(k).$
	Using \eqref{estimate} and \eqref{estimateunderattack}, we can find the following model for the difference in error and residue:
	\begin{align}
	\label{errdeviation}
	\Delta \boldsymbol{e}(k+1)&=(\boldsymbol{A}-\boldsymbol{K}\boldsymbol{C}\boldsymbol{A})\Delta \boldsymbol{e}(k) - \boldsymbol{K} \boldsymbol{y}^a(k+1), \\
	\label{resdeviation}
	\Delta \boldsymbol{z}(k+1)&=\boldsymbol{C}\boldsymbol{A}\Delta \boldsymbol{e}(k)+ \boldsymbol{y}^a(k+1).
	\end{align}
	We define the \emph{cumulative error difference} (CED) at time step $ k $:
	\begin{align}\label{cumerr}
	q(k) \triangleq \Delta \boldsymbol{e}^T(k)\boldsymbol{E} \Delta \boldsymbol{e}(k),
	\end{align}
	where $ \boldsymbol{E} \in \mathbb{R}^{n \times n} $ is the $n\times n $ relative cost matrix of state error. {$ \boldsymbol{E} $ is a positive semi-definite matrix such that $  \boldsymbol{Q}^T\boldsymbol{E}\boldsymbol{Q} \preceq \boldsymbol{E} $, i.e., for any vector $ \boldsymbol{v} $, we have   $ \boldsymbol{v}^T\boldsymbol{Q}^T\boldsymbol{E}\boldsymbol{Q}\boldsymbol{v} \leq  \boldsymbol{v}^T \boldsymbol{E}\boldsymbol{v} $ }, where $ \boldsymbol{Q}\triangleq\boldsymbol{A}-\boldsymbol{K}\boldsymbol{C}\boldsymbol{A} $ which is a Hurwitz stable matrix since the ICI model is a stable system \cite{kumar2015stochastic}. Next, we derive the maximum CED caused by an impulse attack (an attack vector that has nonzero values in the initial time step and zero values afterwards) to an SC. The reason for analyzing the impulse attack is that, any attack vector can be designed by combination of shifted impulse attack vectors. {This is because of the superposition characteristics of linear time invariant systems, the effect of any attack sequence on the ICI is equal to the summation of shifted impulse attack vectors \cite{williams2007linear}.} \vspace{-3mm}
	\begin{proposition}\label{proposition0}
		The maximum CED caused by an impulse attack to a set of sensors $ \mathcal{A}$ is:
		{\begin{equation}
		\begin{aligned}\label{lemma1eq}
		q^m(\boldsymbol{y}^a)\triangleq\boldsymbol{y}^{a^T}(1)\boldsymbol{K}^T\boldsymbol{E}\boldsymbol{K}\boldsymbol{y}^a(1).
		\end{aligned}
		\end{equation}}\vspace{-4mm}
	\end{proposition} \vspace{-4mm}
	\begin{proof}\vspace{-2mm}
		Since $ \boldsymbol{y}^a(1)$ is a vector with $ \boldsymbol{y}^a(1)[i]=0 $ for $ i \notin \mathcal{A} $ and the attack is an impulse input, then, we have $ \boldsymbol{y}^a(k)=0 $ for $ k>1 $. Therefore, using \eqref{errdeviation}, we have $ \Delta \boldsymbol{e}(k)= \boldsymbol{Q}\Delta \boldsymbol{e}(k-1) =  \boldsymbol{Q}^{(k-1)}\boldsymbol{K}\boldsymbol{y}^a(1)$. 
		Therefore, for $ k>2 $, we will have:\vspace{-1mm}
		\begin{align*}
		q(k)= \Delta \boldsymbol{e}^T(k) \boldsymbol{E} \Delta \boldsymbol{e}(k)  = \Delta \boldsymbol{e}^T(k-1) \boldsymbol{Q}^T \boldsymbol{E}  \boldsymbol{Q} \boldsymbol{e}(k-1)\preceq \Delta \boldsymbol{e}^T(k-1) \boldsymbol{E}  \boldsymbol{e}(k-1) = q(k-1),
		\end{align*}
		which means that the maximum value of $ q(k) $ occurs in $ q(1) = \boldsymbol{y}^{a^T}(1)\boldsymbol{K}^T\boldsymbol{E}\boldsymbol{K}\boldsymbol{y}^a(1) $.
	\end{proof}\vspace{-3mm}
	Proposition \ref{proposition0} shows that the maximum CED caused by an impulse attack occurs during first time instant, $ q(1) $, after the initiation of the impulse attack to the sensors. In the presence of the attack, the failure detector in \eqref{failuredetect} computes the following value in each time step $
	\bar{g}(k)=\bar{\boldsymbol{z}}^T(k)\boldsymbol{\mathcal{Z}}^{-1}\bar{\boldsymbol{z}}(k)$.
	Now, we define two new parameters for the analysis of probability of failure in the system as follows: $
	\beta(k) \triangleq \textrm{Pr}(g(k)>g^{t})$ and $ 
	\bar{\beta}(k)  \triangleq \textrm{Pr}(\bar{g}(k)>g^{t}),
	$
	where $ \beta(k) $  and $ \bar{\beta}(k) $ capture probabilities of failure in absence and existence of attack, respectively, and $ g^{t} $ is the failure trigger threshold. {Also, for the subsequent analysis, we consider that $ \boldsymbol{\mathcal{Z}} $ is chosen such that $ \boldsymbol{A}^T\boldsymbol{C}^T \boldsymbol{\mathcal{Z}}^{-1} \boldsymbol{C}\boldsymbol{A} \succeq\boldsymbol{Q}^T \boldsymbol{A}^T\boldsymbol{C}^T \boldsymbol{\mathcal{Z}}^{-1} \boldsymbol{C}\boldsymbol{A}\boldsymbol{Q}$.}
	\begin{definition}
	An impulse attack to set $ \mathcal{A}$ is \emph{$ \alpha $-feasible} if:
	\begin{align}
		D(\boldsymbol{z}(k)||\bar{\boldsymbol{z}}(k))=||\Delta \boldsymbol{z}(k)||_{\boldsymbol{S}}=\sqrt{\Delta\boldsymbol{z}^T(k)\boldsymbol{S}\Delta\boldsymbol{z}(k)}\leq \alpha.
	\end{align}
	for all $ k=1, \dots , \infty $, { where $ \boldsymbol{S} \in \mathbb{R}^{p\times p} $ is a $ p\times p $ matrix defined as } $ \boldsymbol{S} \triangleq\mathcal{Z}^{-1}/2 $ and $ D(\boldsymbol{z}(k)||\bar{\boldsymbol{z}}(k)) $ is the KL distance between $ \boldsymbol{z}(k) $ and $ \bar{\boldsymbol{z}}(k)$.
	\end{definition}
	Using \cite[Theorem 1]{mo2010false}, we can directly prove the convergence of $ \bar{\beta}(k) $ to $ \beta(k) $ as $ D(\boldsymbol{z}(k)||\bar{\boldsymbol{z}}(k))$ goes to $ 0$, as follows.
	\begin{lemma}\label{lemma2}
		For any $ \epsilon > 0 $, there exists $ \alpha >0 $, such that if $
		D(\boldsymbol{z}(k)||\bar{\boldsymbol{z}}(k)) \leq \alpha,
		$
		for $ k=1,\dots, \infty $, then 
		$
		\bar{\beta}(k)\leq \beta(k)+\epsilon
		$
		for all $ k=1,\dots, \infty $.
	\end{lemma}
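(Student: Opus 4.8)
The plan is to exploit the fact that the attack perturbs the residue only through a \emph{deterministic} shift, so that $g(k)$ and $\bar{g}(k)$ become tail probabilities of a central and a non-central chi-squared law whose non-centrality is controlled directly by the $\alpha$-feasibility bound. First I would read off from \eqref{resdeviation} that $\Delta\boldsymbol{z}(k)$ is generated entirely by the attack vector $\boldsymbol{y}^a$ together with the (already deterministic) error deviation $\Delta\boldsymbol{e}(k)$; the process and measurement noises cancel when one forms the difference $\bar{\boldsymbol{z}}(k)-\boldsymbol{z}(k)$. Hence $\bar{\boldsymbol{z}}(k)=\boldsymbol{z}(k)+\Delta\boldsymbol{z}(k)$ with $\Delta\boldsymbol{z}(k)$ deterministic, so $\boldsymbol{z}(k)$ and $\bar{\boldsymbol{z}}(k)$ are Gaussian with the \emph{same} covariance $\boldsymbol{\mathcal{Z}}$ and means differing only by $\Delta\boldsymbol{z}(k)$.

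Next I would identify the detector statistics with chi-squared variables. Since $\boldsymbol{z}(k)\sim\mathcal{N}(\boldsymbol{0},\boldsymbol{\mathcal{Z}})$, the quantity $g(k)=\boldsymbol{z}^T(k)\boldsymbol{\mathcal{Z}}^{-1}\boldsymbol{z}(k)$ is central $\chi^2$ with $p$ degrees of freedom, so in fact $\beta(k)=\textrm{Pr}(\chi^2_p>g^t)$ is constant in $k$. By contrast $\bar{g}(k)=\bar{\boldsymbol{z}}^T(k)\boldsymbol{\mathcal{Z}}^{-1}\bar{\boldsymbol{z}}(k)$ is non-central $\chi^2$ with the same $p$ degrees of freedom and non-centrality parameter $\lambda(k)=\Delta\boldsymbol{z}^T(k)\boldsymbol{\mathcal{Z}}^{-1}\Delta\boldsymbol{z}(k)$. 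Because $\boldsymbol{S}=\boldsymbol{\mathcal{Z}}^{-1}/2$, the feasibility definition gives $D(\boldsymbol{z}(k)\|\bar{\boldsymbol{z}}(k))^2=\tfrac{1}{2}\lambda(k)$, so the hypothesis $D\le\alpha$ yields the uniform cap $\lambda(k)\le 2\alpha^2$ for every $k$.

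I would then close the argument using the fact, which is the content of \cite[Theorem 1]{mo2010false}, that the tail $\textrm{Pr}(\chi^2_p(\lambda)>g^t)$ is continuous and nondecreasing in $\lambda$ and reduces to $\beta=\textrm{Pr}(\chi^2_p>g^t)$ at $\lambda=0$. Stochastic monotonicity gives $\bar{\beta}(k)=\textrm{Pr}(\chi^2_p(\lambda(k))>g^t)\le\textrm{Pr}(\chi^2_p(2\alpha^2)>g^t)$ simultaneously for all $k$, and right-continuity at $\lambda=0$ lets me choose $\alpha>0$ small enough that $\textrm{Pr}(\chi^2_p(2\alpha^2)>g^t)\le\beta+\epsilon$. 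Combining these and using $\beta(k)=\beta$ yields $\bar{\beta}(k)\le\beta(k)+\epsilon$ for all $k=1,\dots,\infty$, as claimed.

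The main obstacle is the uniformity over the infinite horizon: a single $\alpha$ must suffice for every time step, even though $\Delta\boldsymbol{z}(k)$, and hence $\lambda(k)$, varies with $k$. The resolution is exactly the decoupling achieved above — the entire time dependence of $\bar{\beta}(k)$ is funneled through the scalar $\lambda(k)$, which $\alpha$-feasibility bounds \emph{uniformly} by $2\alpha^2$, while $\beta(k)$ is itself time-invariant. Monotonicity of the non-central chi-squared tail then converts the uniform cap on $\lambda(k)$ into a uniform tail bound, so no delicate horizon-dependent estimate is needed.
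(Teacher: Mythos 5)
Your proposal is correct, and it is in fact more complete than what the paper offers: the paper proves Lemma~\ref{lemma2} by a bare appeal to \cite[Theorem 1]{mo2010false} ("we can directly prove the convergence of $\bar{\beta}(k)$ to $\beta(k)$ as $D(\boldsymbol{z}(k)||\bar{\boldsymbol{z}}(k))\rightarrow 0$"), with no further argument. What you supply is the missing reduction that makes that citation legitimate in this setting: because the attack enters the residue only as a deterministic shift $\Delta\boldsymbol{z}(k)$ (the noise realizations cancel in $\bar{\boldsymbol{z}}(k)-\boldsymbol{z}(k)$ by linearity), $g(k)$ is a central $\chi^2_p$ statistic and $\bar{g}(k)$ a non-central one with non-centrality $\lambda(k)=\Delta\boldsymbol{z}^T(k)\boldsymbol{\mathcal{Z}}^{-1}\Delta\boldsymbol{z}(k)=2D^2$, so $\alpha$-feasibility caps $\lambda(k)$ uniformly by $2\alpha^2$; monotonicity plus continuity of the non-central tail at $\lambda=0$ then gives a single $\alpha$ that works for the whole infinite horizon, which is exactly the uniformity the lemma asserts and which the paper's one-line citation glosses over. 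The one implicit assumption you should flag is that $\beta(k)$ is time-invariant because the filter is in steady state (fixed gain, constant innovation covariance $\boldsymbol{\mathcal{Z}}$); the paper grants this by taking the initial time to $-\infty$, so your argument is consistent with the model. In short: same route in spirit (both ultimately rest on the chi-squared tail behavior behind \cite[Theorem 1]{mo2010false}), but yours is a self-contained derivation where the paper's is a citation.
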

	Lemma \ref{lemma2} shows that, if the probability of alarm triggering at time $ k $, $ \beta(k) $, increases by a value of $ \epsilon $ in presence of attack, $ \bar{\beta}(k)=\beta (k)+\epsilon $, then, there exists a value for $ \alpha $ such that an impulse attack can be designed with a KL distance lower than $ \alpha $. {In other words, if the defender wants to increase the probability of triggering an alarm by, e.g., increasing the alarm threshold, then the KL divergence will still be bounded by $ \alpha $. We use the same concept to define the $ \alpha $-feasible attack. Essentially, we assume that, if the defender wants to keep the KL divergence under $ \alpha $, then the probability of triggering the attack will have a very small variation.} Now, if the attacker wants to design an $ \alpha $-feasible impulse attack then it should change the sensor data such that the KL distance never exceeds $ \alpha $. Next, we find the maximum KL distance caused by an impulse attack to a set of sensors. 
	\begin{lemma}\label{lemma3}
		The maximum KL divergence caused by an impulse attack to a set of sensors $ \mathcal{A}$ is:{
		\begin{align}
		D^m(\boldsymbol{y}^a)\triangleq\max\Bigg\{
		\sqrt{\left(\boldsymbol{C}\boldsymbol{A}\boldsymbol{K}\boldsymbol{y}^{a}(1)\right)^T\boldsymbol{S}\boldsymbol{C}\boldsymbol{A}\boldsymbol{K}\boldsymbol{y}^{a}(1)},\sqrt{\boldsymbol{y}^{a^T}(1)\boldsymbol{S}\boldsymbol{y}^{a}(1)}\Bigg\}.\label{lemma3eq}
		\end{align}}
	\end{lemma}   
	\begin{proof}
		From \eqref{errdeviation} and \eqref{resdeviation} we have $
		\Delta \boldsymbol{z}(1)=\boldsymbol{y}^a(1)$ and $
		\Delta \boldsymbol{z}(k)=-\boldsymbol{C}\boldsymbol{A}\boldsymbol{Q}^{(k-2)}\boldsymbol{K}\boldsymbol{y}^{a}(1), \quad k>2,$ therefore, the KL divergence for $ k>2 $ will be:{
		\begin{align}
			D(\boldsymbol{z}(k)||\bar{\boldsymbol{z}}(k))&=\sqrt{\Delta\boldsymbol{z}^T(k)\boldsymbol{S}\Delta\boldsymbol{z}(k)}=\sqrt{\Delta\boldsymbol{z}^T(k)\boldsymbol{\mathcal{Z}}^{-1}/2\Delta\boldsymbol{z}(k)}\nonumber\\&= \sqrt{\boldsymbol{y}^{a^T}(1)\boldsymbol{K}^T\boldsymbol{Q}^{{(k-2)}^T}\boldsymbol{A}^T\boldsymbol{C}^T\boldsymbol{\mathcal{Z}}^{-1}/2\boldsymbol{C}\boldsymbol{A}\boldsymbol{Q}^{(k-2)}\boldsymbol{K}\boldsymbol{y}^{a}(1)}\nonumber\\&
			=\sqrt{\boldsymbol{y}^{a^T}(1)\boldsymbol{K}^T\boldsymbol{Q}^{{(k-3)}^T}\boldsymbol{Q}^T\boldsymbol{A}^T\boldsymbol{C}^T\boldsymbol{\mathcal{Z}}^{-1}/2\boldsymbol{C}\boldsymbol{A}\boldsymbol{Q}\boldsymbol{Q}^{(k-3)}\boldsymbol{K}\boldsymbol{y}^{a}(1)}\nonumber\\&
			\leq \sqrt{\boldsymbol{y}^{a^T}(1)\boldsymbol{K}^T\boldsymbol{Q}^{{(k-3)}^T}\boldsymbol{A}^T\boldsymbol{C}^T\boldsymbol{\mathcal{Z}}^{-1}/2\boldsymbol{C}\boldsymbol{A}\boldsymbol{Q}^{(k-3)}\boldsymbol{K}\boldsymbol{y}^{a}(1)}\nonumber\\&
			 = D(\boldsymbol{z}(k-1)||\bar{\boldsymbol{z}}(k-1)).\label{disineq}
		\end{align}}
		\eqref{disineq} implies that the KL distance is decreasing for $ k>2 $, and hence, the maximum KL distance will occur in {$ k=1$ or $2 $} and this proves \eqref{lemma3eq}.
	\end{proof}
	Lemma \ref{lemma3} finds the maximum KL divergence caused by an impulse attack to a set of sensors. We use the maximum error caused by an impulse attack and maximum KL distance to find the maximum CED caused by an $ \alpha $-feasible attack in the following theorem. This theorem quantifies the maximum CED that the attacker can cause without triggering the alarm to a set of sensors.\vspace{-2mm}
\begin{theorem}\label{Theorem1}\vspace{-2mm}
For any value of $\alpha$ chosen by the defender, the maximum CED caused by an impulse $ \alpha $-feasible attack to a set of sensors $ \mathcal{A}$ is the solution of the following quadratic program with quadratic constraints:
\begin{align}\label{maximization}
		q^m_\alpha&(\mathcal{A},\boldsymbol{A},\boldsymbol{B},\boldsymbol{C})\hspace{-1mm}\triangleq\underset{\boldsymbol{y^a}}\max\,\boldsymbol{y}^{a^T}\boldsymbol{R}_1\boldsymbol{y}^a,\\
		\text{s.t.}\,&  
		\max\left\{\boldsymbol{y}^{a^T}\boldsymbol{P}_1 \boldsymbol{y}^a , \boldsymbol{y}^{a^T}\boldsymbol{P}_2 \boldsymbol{y}^a\right\}\leq \alpha ^2,\label{const3}\\
		&\boldsymbol{y}^a[i] = 0, \quad i \notin \mathcal{A},
\end{align}
where $
		\boldsymbol{R}_1=\boldsymbol{K}^T\boldsymbol{E}\boldsymbol{K},
		\boldsymbol{P}_1=\boldsymbol{S},
		\boldsymbol{P}_2=\boldsymbol{K}^T\boldsymbol{A}^T\boldsymbol{C}^T\boldsymbol{S}\boldsymbol{C}\boldsymbol{A}\boldsymbol{K}.\nonumber$
	\end{theorem}
	\begin{proof}
From Proposition \ref{proposition0}, we know that the maximum CED caused by an impulse attack which we define it as vector $ \boldsymbol{y^a} $ in time step $ k=1 $ to a sensor set $ \mathcal{A}$ is:
\begin{align}\label{maximize}
	q^m(\boldsymbol{y}^a)=\boldsymbol{y}^{a^T}\boldsymbol{K}^T\boldsymbol{E}\boldsymbol{K}\boldsymbol{y}^a,
\end{align}
where $ \boldsymbol{y}^a[i]=0 $ for $ i \notin \mathcal{A}$ and $ y^a[i] $ is the $ i $-th entity of vector $ \boldsymbol{y}^a $. From Lemmas \ref{lemma2} and \ref{lemma3}, we know that the maximum KL distance caused by an $ \alpha $-feasible attack to the sensor set $ \mathcal{A}$ cannot exceed $ \alpha $ and therefore we have:
\begin{align}
	\left(D^{m}(\mathcal{S}^a)\right)^2<\alpha^2,\Rightarrow
	\max\Bigg\{\boldsymbol{y}^{a^T}\boldsymbol{S}\boldsymbol{y}^{a},\left(\boldsymbol{C}\boldsymbol{A}\boldsymbol{K}\boldsymbol{y}^{a}\right)^T\boldsymbol{S}\boldsymbol{C}\boldsymbol{A}\boldsymbol{K}\boldsymbol{y}^{a}\Bigg\}<\alpha^2,\label{constraints}
\end{align}
then, $ \boldsymbol{y}^a $ should maximize \eqref{maximize} with constraints in \eqref{constraints}, and considering $ y^a[i]=0 $ for $ i \notin \mathcal{A}$. Also, since $ \boldsymbol{E} $ and $ \boldsymbol{S} $ are positive-definite matrices then $ \boldsymbol{R}_1 $, $ \boldsymbol{P}_1 $, and $ \boldsymbol{P}_2 $ are all semi-positive definite matrices.
Due to the positive semi-definitiveness, \eqref{maximization} is a convex function and \eqref{const3} is a convex bounded constraint, thus, the solution of \eqref{constraints} will lie at the boundaries\cite{baron1972quadratic}.
	\end{proof}
	Theorem \ref{Theorem1} provides a method for the attacker to find the maximum CED caused by altering a set of sensors without triggering failure alarm. To solve the optimization problem in Theorem \ref{Theorem1}, known techniques such as quadratic programming can be used \cite{baron1972quadratic}. Using Theorem \ref{Theorem1}, we can assign a value to quantify the maximum CED for each of the ICI's SCs. To do so, for each of the ICI's SCs we calculate the following value:
	\begin{align}\label{value}
	\varphi_i(\boldsymbol{A},\boldsymbol{B},\boldsymbol{C})\triangleq q^m_\alpha(\mathcal{N}_i,\boldsymbol{A},\boldsymbol{B},\boldsymbol{C}),
	\end{align}
	where $ \varphi_i $ is the value of SC $ i $ in the state estimation and $ \mathcal{N}_i $ is the set of sensors inside SC $ i $. This value captures the importance of each SC for the attacker and the defender in the first stage of attack, because the attacker can increase the estimation error by $ \varphi_i $ in the second stage of attack after compromising the SC $ i $ in the first stage. Based on this value both the attacker and the defender can prioritize between their actions in the first stage. Since we can now quantify the importance of different SCs under attack, next, we study how the ICI can defend against the first stage of attack during which the attacker and the defender should allocate their available resources on all the SCs based on their values.\vspace{-3mm}
	\section{ICI Security Resource Allocation as a Colonel Blotto Game}\label{section:problem}\vspace{-2mm}
	In this section, we analyze the resource allocation of the attacker and the defender in the first stage of the cyber attack. In our model, the available resources for the defender and attacker are denoted by $ R^d $, and $ R^a $, respectively. Consequently, the defender and the attacker must simultaneously allocate their resources across a finite number of SCs, $ N $. Moreover, each SC $ i $ has a value, $ \varphi_i(\boldsymbol{A},\boldsymbol{B},\boldsymbol{C}) $, given by \eqref{value} \emph{which quantifies the maximum CED} caused by compromising SC $ i $. This value captures both the cyber and physical nature of the ICI as per \eqref{value}. Hereinafter, we use subscripts $a$ and $d$ to denote the attacker and the defender, respectively. 
	
	Also, $ \boldsymbol{r}^j=[r^j_1,\dots,r^j_N]^T $ denotes player $ j $'s allocation vector across $ N $ SCs. In each SC $ i $, the defender assigns a protection level which requires $ r_i^d $ resources. In contrast, the attacker spends some effort to break the sensor's security mechanism, which requires $ r^a_i $ resources in SC $ i $. For instance, in signal watermarking techniques the defender must consider a number of computations in the decoding of each SC's messages in the central server \cite{IoT2018}. To break such watermarking techniques, the attacker must collect the messages of each SC and break the watermarking key using a large number of computation, which requires the attacker to assign a portion of its available computational resources for each SC. Such a resource limitation is not restricted to cases of signal watermarking as it can also be applied to other protection methods such as attack detection filters \cite{pasqualetti2013attack}. 
	
	Therefore, for any protection method, in each SC, if the defender allocates more resources than the attacker then the defender prevents that SC from being compromised. In this case, we assign the normalized value of SC $ i $ to the defender and zero to attacker if the defender wins SC $ i $. In contrast, if the attacker allocates a higher number of resources in each SC, then the attacker can compromise that SC. In this case, we assign the normalized value of SC $ i $ to the attacker and zero to the defender if the attacker wins SC $ i $ (i.e., in this case, the CED is zero, and the defender perfectly protects its SC). Also, in case of equal allocation of resources, which has the probability of zero due to the continuous action space of the attacker and the defender, we share the normalized value of each SC equally between the attacker and the defender. Therefore, in each SC $ i $, the \emph{normalized} payoff for the attacker and defender is given by:\vspace{-0.1cm}
	\begin{align}\label{payoff}
	v^j_i(r^j_i,r^{-j}_i)=
	\begin{cases}
	\phi_i(\boldsymbol{A},\boldsymbol{B},\boldsymbol{C}), & \textrm{if } r^j_i> r^{-j}_i,\\
	\frac{\phi_i(\boldsymbol{A},\boldsymbol{B},\boldsymbol{C})}{2}, & \textrm{if } r^j_i= r^{-j}_i,\\
	0, &\textrm{if } r^j_i< r^{-j}_i,
	\end{cases}
	\end{align}
	where 
	$ -j $ is the opponent of $ j $ and 
	\begin{align}\label{valuedef}
	\phi_i(\boldsymbol{A},\boldsymbol{B},\boldsymbol{C})=\frac{\varphi_i(\boldsymbol{A},\boldsymbol{B},\boldsymbol{C})}{\sum_{m=1}^N\varphi_m(\boldsymbol{A},\boldsymbol{B},\boldsymbol{C})}.
	\end{align}
	The total payoff of the defender and the attacker resulting from allocations across all $ N $ SCs is the sum of the individual payoffs in \eqref{payoff} received from each individual SC:\vspace{-0.15cm}
	\begin{align}\label{defTotalpayoff}
	u^j(\boldsymbol{r}^j,\boldsymbol{r}^{-j})\hspace{-1mm}=\hspace{-1mm}\sum_{i=1}^{N}v^j_i(r^j_i,r^{-j}_i).
	\end{align}
	Here, we define the total maximum CED caused by the allocation vectors $ \boldsymbol{r}^a $ and $ \boldsymbol{r}^d $ as follows:
	\begin{align}\label{gameCED}
	\pi(\boldsymbol{r}^a,\boldsymbol{r}^d)\triangleq u^a(\boldsymbol{r}^a,\boldsymbol{r}^d)\sum_{m=1}^N\varphi_m(\boldsymbol{A},\boldsymbol{B},\boldsymbol{C}),
	\end{align}
	since $ u^a(\boldsymbol{r}^a,\boldsymbol{r}^d) $ captures summation of the estimation errors from all the SCs. The attacker aims to increase its utility function in \eqref{defTotalpayoff} by maximizing {the sum of the compromised SC valuations} which results in maximizing the total state estimation error. Also, the defender seeks to increase its utility function in \eqref{defTotalpayoff} by maximizing {the sum of the valuations of the protected SC} from the cyber attack to minimize the state estimation error. Moreover, the payoff for each player depends on the actions of both players and, thus, we can use a \emph{game-theoretic approach} to solve this problem \cite{bacsar1998dynamic}. In particular, next, we first model the problem as a two-player Colonel Blotto game \cite{Roberson2006} between the attacker and the defender, and then present the solution for the game. The Colonel Blotto game framework is particularly suitable for the considered ICI security problem since, in this game, two colonels simultaneously allocate their available military resources on $ N $ \emph{battlefields}, where the winner of each battlefield is the colonel with a more allocated resources and both the colonels aim to maximize {the sum of the valuations of the} won battlefields. This is similar to the problem in \eqref{defTotalpayoff}, in which SCs are the battlefields and the defender (attacker) maximizes the sum of the valuations of the protected (compromised) SCs.\vspace{-4mm}
	\subsection{Game Formulation and Pure Strategy Nash Equilibrium}\vspace{-2mm}
	To model the interdependent decision making processes of the attacker and defender, we introduce a noncooperative Colonel Blotto game\cite{Roberson2006,GCBG2018} $\splitatcommas{ \Big\{\mathcal{P},\{\mathcal{Q}^j\}_{j\in \mathcal{P}}, \{R^j\}_{j\in \mathcal{P}},N,\{\phi_i^a,\phi_i^d\}_{i=1}^N, \{u^j\}_{j\in\mathcal{P}}\Big\}}$ defined by six components: a) the \emph{players} which are the attacker $ a $ and the defender $ d $ in the set $ \mathcal{P}\triangleq \{a,d\} $, b) the \emph{strategy} spaces $ \mathcal{Q}^j $ for $ j \in \mathcal{P} $, c) \emph{available resource} $ R^j $ for $ j \in \mathcal{P} $, d) \emph{number} of the SCs $ N $, e) normalized \emph{value} of each SC $ i $ for $ j \in \mathcal{P} $, $ \phi^j_i(\boldsymbol{A},\boldsymbol{B},\boldsymbol{C}) $, and f) the \emph{utility function}, $ u^j $, for each player. For both players, the set of \emph{pure} strategies $ \mathcal{Q}^j $ corresponds to the different possible resource allocations across the SCs:
	\begin{align}\label{strategy_set}
	\mathcal{Q}^j=\left\{\boldsymbol{r}^j\Bigg|\sum_{i=1}^{N} r_i^j\leq R^j,r_i^j\geq 0  \right\}.
	\end{align}
	Also, the utility function of each player, $ u^j $, can be defined as in \eqref{defTotalpayoff}. The utility function in \eqref{defTotalpayoff} is a symmetric case for the Colonel Blotto game where $ \phi^d_i(\boldsymbol{A},\boldsymbol{B},\boldsymbol{C})=\phi^a_i(\boldsymbol{A},\boldsymbol{B},\boldsymbol{C})=\phi_i(\boldsymbol{A},\boldsymbol{B},\boldsymbol{C}) $, which indicates that the values of SCs are equal for the defender and the attacker. In the following, first we present the solution of the Colonel Blotto game for a general case of $ \phi^d_i(\boldsymbol{A},\boldsymbol{B},\boldsymbol{C})\neq\phi^a_i(\boldsymbol{A},\boldsymbol{B},\boldsymbol{C}) $, then we derive the solution of symmetric case. For notational simplicity, hereinafter, we drop the arguments $ (\boldsymbol{A},\boldsymbol{B},\boldsymbol{C}) $ in the notation of variables $ \varphi_i $ and $ \phi_i^j $.
	
	One of the most important solution concepts for noncooperative games is that of the \emph{Nash equilibrium} (NE). The NE characterizes a state at which no player $ j $ can improve its utility by changing its own strategy, given the strategy of the other player is fixed. For a noncooperative game, the NE in pure (deterministic) strategies can be defined as follows:\vspace{-2mm}
	\begin{definition}
		A \emph{pure-strategy Nash equilibrium} of a noncooperative game is a vector of strategies $ [{\boldsymbol{r}^a}^*,{\boldsymbol{r}^d}^*] \in \mathcal{Q}^a \times \mathcal{Q}^d $  such that $ \forall j \in \mathcal{P}$, the following holds true:
		\begin{align}
		u^j({\boldsymbol{r}^j}^*,{\boldsymbol{r}^{-j}}^*) \geq 	u^j({\boldsymbol{r}^j},{\boldsymbol{r}^{-j}}^*), \forall {\boldsymbol{r}^j} \in \mathcal{Q}^j.
		\end{align}
	\end{definition}\vspace{-3mm}
	The NE characterize a stable game state at which the defender cannot improve the protection of the ICI's SCs by \emph{unilaterally} changing its action $ \boldsymbol{r}^d $ given that the action of the attacker is fixed at $ {\boldsymbol{r}^a}^* $. At the NE, the attacker cannot increase the state estimation error of the ICI by changing its action, $ \boldsymbol{r}^a $, when the defender keeps its action fixed at $ \boldsymbol{r}^{d*} $. However, the NE is not guaranteed to exist in pure strategies. In particular, for a Colonel Blotto game, without loss of generality, if $ R^d>R^a $, then it can be proven that, for $ NR^a>R^d $ there exist no pure-strategy NE \cite{Roberson2006}. However, it is proven that there exists at least one NE in \emph{mixed strategies} \cite{bacsar1998dynamic} for noncooperative games. When using mixed strategies, each player will assign a probability for playing each one of its pure strategies. For an ICI security problem, the use of mixed strategies is motivated by two facts: a) both players must randomize over their strategies in order to make it nontrivial for the opponent to guess their potential action, and b) the allocation of resources can be repeated over an infinite time duration and mixed strategies can capture the frequency of choosing certain strategies for both players. A mixed strategy, which can be termed as a \emph{distribution of resources}, for player $ j $ is an $ N $-variate distribution function $ G^j : \mathbb{R}_{+}^N \rightarrow [0,1] $ with support contained in player $ j $'s set of feasible allocations, $ \mathcal{Q}^j $. We also define univariate marginal distribution functions (MDFs) $ \{F^j_i\}_{i=1}^N:\mathbb{R}_{+} \rightarrow [0,1] $ for each SC $ i $ and can be called as \emph{distribution of resources on each SC $ i $}. \vspace{-4mm}
	\subsection{Mixed-Strategy Nash Equilibrium Solution}\vspace{-2mm}
	In a game-theoretic setting, each player chooses its own mixed-strategy distribution to maximize its expected utility. We first derive the solution for a special case of our problem in which the attacker and the defender consider the expected allocation of their resources on each SC instead of exact allocation. This is a special case of the Colonel Blotto game known as the \emph{General Lotto} game \cite{kovenock2015generalizations}. In a Colonel Blotto game, the sum of allocated resources cannot exceed the limited resources for the players as in \eqref{strategy_set}. In contrast, in a in General Lotto game, the sum of \emph{expected} allocated resource on SCs cannot exceed the restricted resource of players:\vspace{-1mm}
	\begin{align}\label{expres}
	\mathcal{Q}^j=\left\{\boldsymbol{r}^j\Bigg|\sum_{i=1}^{N} \mathbb{E}_i^j(r)\leq R^j,r_i^j\geq 0  \right\}.
	\end{align} 
	where $  \mathbb{E}_i^j(r) $ is the expected value of resources allocated by player $ j $ on SC $ i $. In this case, the utility of each player $ j \in \mathcal{P} $ is defined as the expected value over its mixed strategies:\vspace{-1mm}
	\begin{align}\label{expected_payoff}
	U^j(G^j,G^{-j})=	U^j(\{F^j_i\}_{i=1}^N,\{F^{-j}_i\}_{i=1}^N)
	=\sum_{i=1}^N\left[\int_{0}^{\infty}\phi_i^jF^{-j}_i(r^j_i)dF^j_i\right]. 
	\end{align}
	Thus, player $ j $'s optimization problem considering its constraint on the available resource is:
	\begin{align}\label{budget_constraint}
	\max_{\{F^j_i\}_{i=1}^N}\sum_{i=1}^{N}\left[\int_{0}^{\infty}\left[\phi_i^jF^{-j}_i(r^j_i)-\zeta^jr^j_i\right]dF^j_i\right]+\zeta^jr^j,
	\end{align}
	where $ \zeta^j $ is a multiplier for player $ j $'s expected resource allocation constraint. For each $ i=1,\dots,N $, the corresponding first-order condition for maximizing \eqref{budget_constraint} is given by:
	\begin{align}
	\frac{d}{dr^j_i}\left[\phi_i^jF^{-j}_i(r^j_i)-\zeta^j\right]=0,\Rightarrow
	\label{allpay}
	\frac{\phi_i^j}{\zeta^j}\frac{d}{dr^j_i}F^{-j}_i(r^j_i)=1,
	\end{align}
	where \eqref{allpay} is equivalent to the necessary condition for a single all-pay auction game where player $ j $'s value for the prize in auction is $ \frac{\phi_i^j}{\zeta^j} $\cite{baye1996all}. In such an all-pay auction, if $ \frac{\phi_i^j}{\zeta^j} \geq \frac{\phi_i^{-j}}{\zeta^{-j}} $ the solution of \eqref{allpay} is described as follows:
	\begin{align}\label{Fdist1}
	F^{-j}_i(r)=\left(\frac{\frac{\phi_i^j}{\zeta^j} - \frac{\phi_i^{-j}}{\zeta^{-j}}}{\frac{\phi_i^j}{\zeta^j}}\right)+\frac{r}{\frac{\phi_i^j}{\zeta^j}}, \, \, \, \,
	F^{j}_i(r)=\frac{r}{\frac{\phi_i^{-j}}{\zeta^{-j}}},\hfil r\in\left[0, \frac{\phi_i^{-j}}{\zeta^{-j}}\right].
	\end{align}
	Now, to find the multipliers $ (\zeta^a,\zeta^d) $, let $ \mu \triangleq \frac{\zeta^a}{\zeta^d} $ and assume that $ \Omega_a(\mu) $ is the set of SCs in which $ \frac{\phi_i^a}{\phi_i^d}>\mu $. Then using \eqref{budget_constraint}, \eqref{Fdist1}, we have:
	\begin{align}\label{constd}
	\sum_{i \in \Omega_a(\mu)}\frac{\phi_i^d}{2\zeta^d}+\sum_{i \notin \Omega_a(\mu)}\frac{\left(\frac{\phi_i^a}{\zeta^a}\right)^2}{2\left(\frac{\phi_i^d}{\zeta^d}\right)}&=R^a,\\\label{consta}
	\sum_{i \in \Omega_a(\mu)}\frac{\left(\frac{\phi_i^d}{\zeta^d}\right)^2}{2\left(\frac{\phi_i^a}{\zeta^a}\right)}+\sum_{i \notin \Omega_a(\mu)}\frac{\phi_i^a}{2\zeta^a}&=R^d.
	\end{align}
	From \cite[Propostion 1]{kovenock2015generalizations} we know that there exists at least one solution to \eqref{constd} and \eqref{consta}. 
	
	Now that we characterized the functions that maximize the expected utility of players in \eqref{expected_payoff}, we first define the solution concept of \emph{mixed strategy Nash equilibrium} (MSNE) and then, finalize the solution of Lotto game by deriving its MSNE. The MSNE is defined as follows:\vspace{-2mm}
	\begin{definition}\vspace{-2mm}
		A mixed strategy profile $ G^* $ constitutes a mixed strategy Nash equilibrium if for player $ j $ we have:
		\begin{align}
		U^j({G^j}^*,{G^{-j}}^*)\geq U^j(G^j,{G^{-j}}^*) \,\, \forall G^j \in \mathcal{G}^j
		\end{align} 
		where $ \mathcal{G}^j $ is the set of all probability distributions for player $ j $ over its action space $ \mathcal{Q}^j $.
	\end{definition}
	The MSNE for this game characterizes a state of the system at which the defender has chosen its optimal randomization over the allocation of resources on SCs and, thus, cannot improve the protection of ICI's SCs by changing this choice. Also, the MSNE for the attacker is a probability distribution that captures the allocation of its resources over the SCs in a way to maximize the state estimation error when the defender chooses its MSNE strategies. Using the definition of the MSNE, we define the expected CED at MSNE as follows:
	\begin{align}
	\Pi({G^a}^*,{G^d}^*)=\Pi(\{{F^a}^*_i\}_{i=1}^N,\{{F^d}^*_i\}_{i=1}^N)\label{expCEDgame}\triangleq U^a({G^a}^*,{G^d}^*)\sum_{m=1}^{N}\varphi_i.
	\end{align}
	It is proven in \cite[Theorem 1]{kovenock2015generalizations} that for each solution $ (\zeta^d,\zeta^a) $ for system of equations in \eqref{constd} and \eqref{consta}, each player in a General Lotto game has a unique MSNE with univariate marginal distributions in \eqref{Fdist1}. In the following {remark}, we characterize the solution for our problem when the values of the ICI's SCs for both attacker and defender are equal and, then, we find the expected state estimation error.
\begin{remark}\label{proposition1}
		For the problem of resource allocation over SCs having equal values for the attacker and defender $\phi_i^a=\phi_i^d\triangleq\frac{\varphi_i(v)}{\sum_{i=1}^N\varphi_i(v)}$, at the MSNE, the MDFs for the attacker and defender, when the defender's resources are greater than the attacker's resources, $ R^d\geq R^a $, will be given by:
		\begin{align}\label{sol1}
		{F_i^a}^*(r)=\left(1-\frac{R^a}{R^d}\right)+\frac{r}{2\phi_iR^d}\frac{R^a}{R^d}, \,\,\, {F_i^d}^*(r)=\frac{r}{2\phi_iR^d},\hfill r\in [0,2\phi_iR^d],
		\end{align} 
		and the expected CED at MSNE will be $ \frac{R^a}{2R^d}\sum_{i=1}^{N}\varphi_i $.
		Considering $ \phi_i^a=\phi_i^d$  and $ R^d\geq R^a $ the solution of \eqref{consta} and \eqref{constd} is $ \zeta^a=\frac{1}{2R^d}, \zeta^d=\frac{R^a}{2(R^d)^2} $. Then, by substituting  $ \zeta^a $ and $ \zeta^d $ into \eqref{Fdist1} we can directly prove \eqref{sol1} and find the expected CED at the MSNE.
	\end{remark} \vspace{-3mm}
	The value $ \frac{R^a}{2R^d}\sum_{i=1}^{N}\varphi_i $ captures the ICI's expected estimation error for the case in which, the available resource for the attacker and defender are $R^a $ and $ R^d $, respectively. From Proposition \ref{proposition1}, we can conclude two important points: a) the probability of allocation of resources greater than $ 2\phi_iR^d $ to SC $ i $ is zero, $ \textrm{Pr}\left(r_i^j>2\phi_iR^d\right)=0 $, b) as the ratio of the attacker's available resource to the defender's available resource, $ \frac{R^a}{R^d} $, increases, the expected state estimation error increases. Next, we prove that, if the defender concentrates only on one of the CIs without considering their interdependence, then the expected estimation error increases.\vspace{-3mm}
	\begin{theorem}\label{Theoreminterdependence}\vspace{-2mm}
		Suppose the sets $ \mathcal{N}^e, \mathcal{N}^g,\mathcal{N}^w $ contain the SCs inside the power, natural gas, and water CIs. Then, the ICI's expected estimation error increases if the defender does not consider the interdependence between the CIs.  \vspace{-3mm}
	\end{theorem}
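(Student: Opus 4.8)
The plan is to recast both protection policies as instances of the Lotto game already solved and to exploit the fact that, under a common valuation, that game is constant-sum. First I would fix the two scenarios precisely. In the \emph{unified} policy the defender treats all $N$ SCs as a single battlefield set with budget $R^d$, uses the coupled valuations $\varphi_i=q^m_\alpha(\mathcal{N}_i,\boldsymbol{A},\boldsymbol{B},\boldsymbol{C})$, and by Proposition \ref{proposition1} the expected CED is exactly $\frac{R^a}{2R^d}\sum_{m=1}^N\varphi_m$. In the \emph{separate} policy the defender ignores the coupling: it partitions its budget as $R^d=R^d_e+R^d_g+R^d_w$ across $\mathcal{N}^e,\mathcal{N}^g,\mathcal{N}^w$ and, treating each CI as standalone, assigns values $\hat\varphi_i$ computed from the isolated matrices rather than from $(\boldsymbol{A},\boldsymbol{B},\boldsymbol{C})$. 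Since the true damage from compromising SC $i$ is still governed by the coupled dynamics, the attacker, assumed omniscient, keeps the true values $\phi_i=\varphi_i/\sum_m\varphi_m$; the separate policy is therefore played on an asymmetric game with $\phi_i^d=\hat\varphi_i/\sum_m\hat\varphi_m\neq\phi_i^a=\phi_i$.

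The key observation I would use is that the \emph{true} game, in which both players are scored by the coupled values $\phi_i$, is constant-sum: from \eqref{payoff} each SC awards a total of $\phi_i$ to the two players in every outcome, so $u^a+u^d=\sum_i\phi_i=1$. This game thus has a value, and Proposition \ref{proposition1} identifies it: the symmetric MSNE is a maximin (security) strategy for the defender guaranteeing $u^d=1-\frac{R^a}{2R^d}$, i.e. $u^a=\frac{R^a}{2R^d}$. I would then show that the separate policy forces the defender off this security strategy. Writing $\hat G^d$ for the product-of-subgames strategy that is optimal for the \emph{perceived} isolated-value game, its true security level is at most the game value, $\min_{a}u^d(\hat G^d,a)\le\max_{G^d}\min_{a}u^d=1-\frac{R^a}{2R^d}$, simply because $\hat G^d$ is one of the defender's feasible strategies. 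Hence the attacker's true utility against $\hat G^d$ is at least $\frac{R^a}{2R^d}$, and the separate-policy expected CED, which by \eqref{gameCED} equals $u^a\sum_m\varphi_m$, is at least $\frac{R^a}{2R^d}\sum_m\varphi_m$, i.e. no smaller than under the unified policy.

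To upgrade this to a strict increase I would show that ignoring the coupling genuinely moves the defender away from every optimal security strategy. The substantive step is that the isolated valuations differ from the coupled ones, $\hat\varphi_i\neq\varphi_i$: compromising SC $i$ perturbs the estimates of \emph{all} three CIs through the off-diagonal blocks of $\boldsymbol{A},\boldsymbol{C}$ in \eqref{matrixAdef}, which changes the fixed gain $\boldsymbol{K}$, the matrix $\boldsymbol{Q}=\boldsymbol{A}-\boldsymbol{K}\boldsymbol{C}\boldsymbol{A}$, and hence $q^m_\alpha$ in Theorem \ref{Theorem1}; as these values are not a common rescaling of the coupled ones, the induced marginals \eqref{Fdist1}--\eqref{Fdist2} differ from the symmetric ones \eqref{sol1}--\eqref{sol2} and the attacker strictly exploits the mismatch. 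I expect this to be the main obstacle, on two fronts. First, one must pin down a precise model of ``ignoring interdependence'' and verify that $\hat\varphi_i\neq\varphi_i$ non-degenerately for the power--gas--water system. Second, the strict inequality requires the symmetric MSNE to be essentially the unique maximin strategy, so that $\hat G^d$ is genuinely exploitable; I would establish this through the asymmetric characterization \eqref{Fdist1}--\eqref{consta}, computing the attacker's true utility against $\hat G^d$ from \eqref{constd}--\eqref{consta} and checking it exceeds $\frac{R^a}{2R^d}$. It is worth noting that a pure budget-split with \emph{correct} values and the proportional allocation $R^d_c=R^d\sum_{i\in\mathcal{N}^c}\varphi_i/\sum_m\varphi_m$ only yields equality in the Lotto framework, so the strict gap must come from the misvaluation rather than from the partition of the budget alone.
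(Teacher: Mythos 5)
Your core inequality is correct, but you reach it by a genuinely different route from the paper, and you stop short of what the theorem actually asserts. The paper formalizes ``not considering interdependence'' concretely: the defender keeps its full budget $R^d$ but assigns value $\varepsilon\rightarrow 0$ to every SC outside one chosen CI (the water CI, w.l.o.g.), so that $\phi_i^a/\phi_i^d=\kappa=\sum_{m\in\mathcal{N}^w}\varphi_m/\sum_{m=1}^{N}\varphi_m<1$ on $\mathcal{N}^w$ and $\rightarrow\infty$ elsewhere. It then solves the resulting \emph{asymmetric} General Lotto game explicitly via \eqref{constd}--\eqref{consta} (finding $\zeta^a=\kappa/2R^d$, $\zeta^d=R^a/2(R^d)^2$), writes down the MDFs in closed form, and computes the new expected CED exactly as $\bigl(\frac{R^a\kappa}{2R^d}+\frac{1-\kappa}{2}\bigr)\sum_{i}\varphi_i$, which strictly exceeds $\frac{R^a}{2R^d}\sum_{i}\varphi_i$ whenever $\kappa<1$ and $R^a<R^d$. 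Your argument instead observes that the true game is constant-sum ($u^a+u^d=1$), so the symmetric MSNE of Proposition \ref{proposition1} is a security strategy attaining the value $1-\frac{R^a}{2R^d}$, and any other feasible defender strategy --- in particular any ``separate'' policy --- concedes at least $\frac{R^a}{2R^d}$ to a best-responding attacker. This is a clean and more general observation (it covers every deviation at once, not just the specific $\varepsilon$-valuation model), and it buys the weak inequality with essentially no computation; what it does not buy is the strict increase the theorem claims, nor the quantitative gap $\bigl(\frac{R^a\kappa}{2R^d}+\frac{1-\kappa}{2}\bigr)\big/\frac{R^a}{2R^d}$ that the paper later uses to validate the simulations in Section \ref{sect:simul} (the ratio $2.86$).

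Two further points. First, your formalization of the separate policy (a budget partition plus valuations $\hat\varphi_i$ recomputed from decoupled dynamics) is not the paper's: the paper never recomputes $q^m_\alpha$ from isolated matrices, it only reweights the true $\varphi_i$ over a single CI, so the questions you anticipate about pinning down the decoupled $\boldsymbol{A}$, $\boldsymbol{C}$, $\boldsymbol{K}$ simply do not arise in its argument. Second, your proposed route to strictness --- establishing $\hat\varphi_i\neq\varphi_i$ non-degenerately and an essential-uniqueness property of the maximin strategy --- is harder than necessary and, as you yourself note, would in practice collapse into evaluating the asymmetric equilibrium through \eqref{Fdist1}--\eqref{consta}, which is exactly the paper's computation. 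If you want the strict statement, the direct closed-form evaluation of the asymmetric equilibrium is the shorter path; your constant-sum argument is best kept as a one-line a priori justification that the separate policy can never help.
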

	\begin{proof}
		Without loss of generality, suppose that the defender only protects the water CI while the attacker considers all of the ICI's SCs. In this case, we assume that the value of the SCs in the natural gas and power CI for the defender is $ \varepsilon \rightarrow 0 $. The water CI's SCs, however, will have new values as $	\phi^d_i=\frac{\varphi_i}{\sum_{m \in \mathcal{N}^w}\varphi_m}, \, \forall i \in \mathcal{N}^w,$ while the values of SCs for the attacker are similar to \eqref{valuedef}. Then, the ratio of the values for the attacker and defender is as follows:
		\begin{align}
		\frac{\phi^a_i}{\phi^d_i}=\begin{cases}
		\kappa\triangleq\frac{\sum_{m\in\mathcal{N}^w}\varphi_m}{\sum_{m=1}^{N}\varphi_m}<1, & \textrm{if } i\in \mathcal{N}^w,\\
		\frac{\varphi_i}{\varepsilon}\rightarrow \infty, & \textrm{if } i \notin \mathcal{N}^w.
		\end{cases}
		\end{align}
		To find the values of $ \zeta^a $ and $ \zeta^d $ we consider two cases for $ \mu $:\\
		If $ \mu<\kappa $, then $ \Omega_a(\mu)=\mathcal{N}$, and, thus, from \eqref{constd} and \eqref{consta} we have $ \zeta^d=\frac{1}{2R^a} $, $ \zeta^a=\frac{\kappa R^d}{2(R^a)^2} $ which results in $ \mu = \frac{\kappa R^d}{R^a} $ where it violates the condition $ \mu<\kappa $. \\
		If $ \mu\geq\kappa $, then $ \Omega_a(\mu)=\mathcal{N}^g\bigcup \mathcal{N}^e$, and from \eqref{constd} and \eqref{consta} we find $ \zeta^a=\frac{\kappa}{2R^d} $, $ \zeta^d=\frac{R^a}{2(R^d)^2} $ which results in $ \mu=\frac{\kappa R^d}{R^a}\geq \kappa $. Therefore, the MDFs for the attacker and the defender are:
		\begin{align}
		\bar{F}_i^a(r)&=
		\begin{cases}
		1, &\hspace{-4mm}\begin{aligned}
		r=0,i\notin \mathcal{N}^w
		\end{aligned}\\
		\hspace{-1mm}\left(1-\frac{R^a}{R^d}\right)+\frac{rR^a\kappa}{2(R^d)^2\phi_i}, &\hspace{-3.5mm}
		r\hspace{-1mm}\in\hspace{-1mm} \left[0,\frac{2\phi_iR^d}{\kappa}\right]\hspace{-1mm},\hspace{-0.5mm}
		i\in \mathcal{N}^w
		\end{cases}\\
		\bar{F}_i^d(r)&=
		\begin{cases}
		1,& r=0, i\notin \mathcal{N}^w\\
		\frac{r\kappa}{2R^d\phi_i}, & r\in \left[0,\frac{2\phi_iR^d}{\kappa}\right],  i\in \mathcal{N}^w,
		\end{cases}
		\end{align}	
		Using above distribution functions and \eqref{expCEDgame}, we can find the expected CED at the MSNE as $
		\Pi(\{\bar{F}^d_i\}_{i=1}^N,\{\bar{F}^{a}_i\}_{i=1}^N)=\left(\frac{R^a\kappa}{2R^d}+\frac{1-\kappa}{2}\right)\sum_{i=1}^{N}\varphi_i.$ To reduce the expected CED in Proposition \ref{proposition1}, we must have:
		\begin{align}
		\left(\frac{R^a\kappa}{2R^d}+\frac{1-\kappa}{2}\right)\sum_{i=1}^{N}\varphi_i<\frac{R^a}{2R^d}\sum_{i=1}^{N}\varphi_i, \Rightarrow
		\left(\kappa-1\right)\left(\frac{R^a}{R^d}-1\right)<0,
		\end{align}
		and since $ \kappa<1 $ then we need to have $ \frac{R^a}{R^d}>1 $ which results in a contradiction because we know that $ \frac{R^a}{R^d}\leq1 $. Therefore, the defender is never better off if it does not allocate resources to the water infrastructure. Hence, when the defender allocates its resources only on a subset of CIs, although the expected allocated resources on these CIs increases, the estimation error of their states will also increase due to the interdependence of the state variables of the ICI.
	\end{proof}
	Theorem \ref{Theoreminterdependence} illustrates the role of the interdependence between power, natural gas, and water CIs in the state estimation. The defender must consider all the CIs and their interdependence in the security analysis. Otherwise, if the defender only protects one of the CIs, then the attacker can cause higher estimation error on all the state variables of the ICI. 
	
The solutions presented in Proposition \ref{proposition1} and Theorem \ref{Theoreminterdependence} are for the General Lotto game in which the constraints on resources hold true in expectation as in \eqref{expres}. Next, we analyze a special case for the values of the SCs in which the solution of the General Lotto game in Proposition \ref{proposition1} can be applied to our original CBG. {We define $ n_\delta $ as the total number of distinct SCs with distinct valuations and show the following result:\vspace{-5mm}
\begin{theorem}\label{TheoremBlo}
For $ n_\delta \rightarrow \infty$, the derived solution of General Lotto game in Remark \ref{proposition1} can be applied to the original CBG for ICI.\vspace{-5mm}
\end{theorem}
\begin{proof}
From \cite[Proposition 2]{kovenock2015generalizations}, given a solution $ (\zeta^a,\zeta^d) $ to \eqref{constd} and \eqref{consta}, if for each pair of valuations $ (\phi^a_i,\phi_i^d) $ we have $ \frac{\phi^a_i \zeta^d}{\phi^d_i\zeta^a}\leq 1 $, and $ \frac{2}{n_\delta} \leq \frac{\phi^a_i \zeta^d}{\phi^d_i\zeta^a} $, then there exists a Nash equilibrium of the CBG with the same set of univariate marginal distributions and expected payoffs in the General Lotto game. In our defined game $ \phi^d_i = \phi^a_i\,\, \forall i = \mathcal{N} $, we have derived in Remark \ref{proposition1} that $ \zeta^a=\frac{1}{2R^d}, \zeta^d=\frac{R^a}{2(R^d)^2} $. Hence, we will have $ \frac{\phi^a_i \zeta^d}{\phi^d_i\zeta^a} = \frac{ \frac{R^a}{2(R^d)^2}}{\frac{1}{2R^d}}= \frac{R^a}{R^d} \leq 1 $. Therefore, for a large-scale ICI where $ n_\delta \rightarrow \infty $, then we will have $ \frac{2}{n_\delta}\rightarrow 0  \leq \frac{R^a}{R^d} = \frac{\phi^a_i \zeta^d}{\phi^d_i\zeta^a} $, which proves Theorem \ref{TheoremBlo}.
\end{proof}}

In this section, we derived two key theorems. Theorem \ref{Theoreminterdependence} highlights the importance of interdependence in the security analysis and proves that the defender cannot minimize the state estimation error by allocating resources to a single CI. Moreover, we characterized the solution of our Colonel Blotto game for the ICI model in Theorem \ref{TheoremBlo} and we derived the expected estimation error at the MSNE as a function of the attacker's and the defender's available resources. 	\vspace{-3mm}
	\section{Simulation Results and Analysis} \label{sect:simul} \vspace{-2mm}
	{For our simulations, we study the ICI example in Fig. \ref{fig_ICI} which captures a real-world ICI scenario such as in \cite{nan2017quantitative,alamian2012state,andersson2012dynamics}, and \cite{Burgschweiger2009}.
	In Fig. \ref{fig_ICI}, we consider 10 generators out of which $6$ are supplied by natural gas and $ 4 $ require water flow to operate. Here, $ 11 $ natural gas pipelines, and $ 11 $ water pipelines distribute natural gas and water to the demand junctions. Based on this example we find the matrices $ \boldsymbol{A} $ and $ \boldsymbol{B} $ in \eqref{eq:interCI} using Appendix \ref{ICIAppendix} and simulate the ICI. To illustrate how the changes in one CI can affect the state variables of other CIs, we increase the power demand in generator $ 5 $, $ u_5^e $, at time $ t=20.5 $. Fig. \ref{fig_interdependence} shows the change of state variables of the natural gas pipeline between junctions $ 2 $ and $ 3 $ and state variables of water pipeline between the junctions $ 2 $ and $ 3 $. The reason is that any increase in power demand results in an increase of electric power generation, and due to the interdependence between electric power generation and the consumption of the natural gas and water, the state variables of the natural gas and water CIs change. }In this ICI, each power generator and each junction in  the natural gas and water systems has a demand profile, which specifies power, gas, and, water demand, at each time. We design $ \boldsymbol{A} $ and $ \boldsymbol{B} $ such the dynamic ICI model constitutes an asymptotically stable system. Also, we consider 32 SCs which collect sensor data from different physical components. Based on the sensor network architecture in Fig. \ref{fig_ICI}, we generate the matrix $ \boldsymbol{C} $. In our simulations, we consider $ 0.5 $-feasible attacks that we use to compute the values of the SCs, $ \phi_i(\boldsymbol{A},\boldsymbol{B},\boldsymbol{C}) $. {Moreover, in our example all of the SCs are distinct, thus, for $ R^a/R^d > 1/32 $ the solution of Blotto game in Theorem \ref{TheoremBlo} will hold.}
\begin{figure*}
	\begin{subfigure}{0.66\textwidth}
		\captionsetup{singlelinecheck = false, justification=justified}
		\centering
		\includegraphics[width=\columnwidth]{./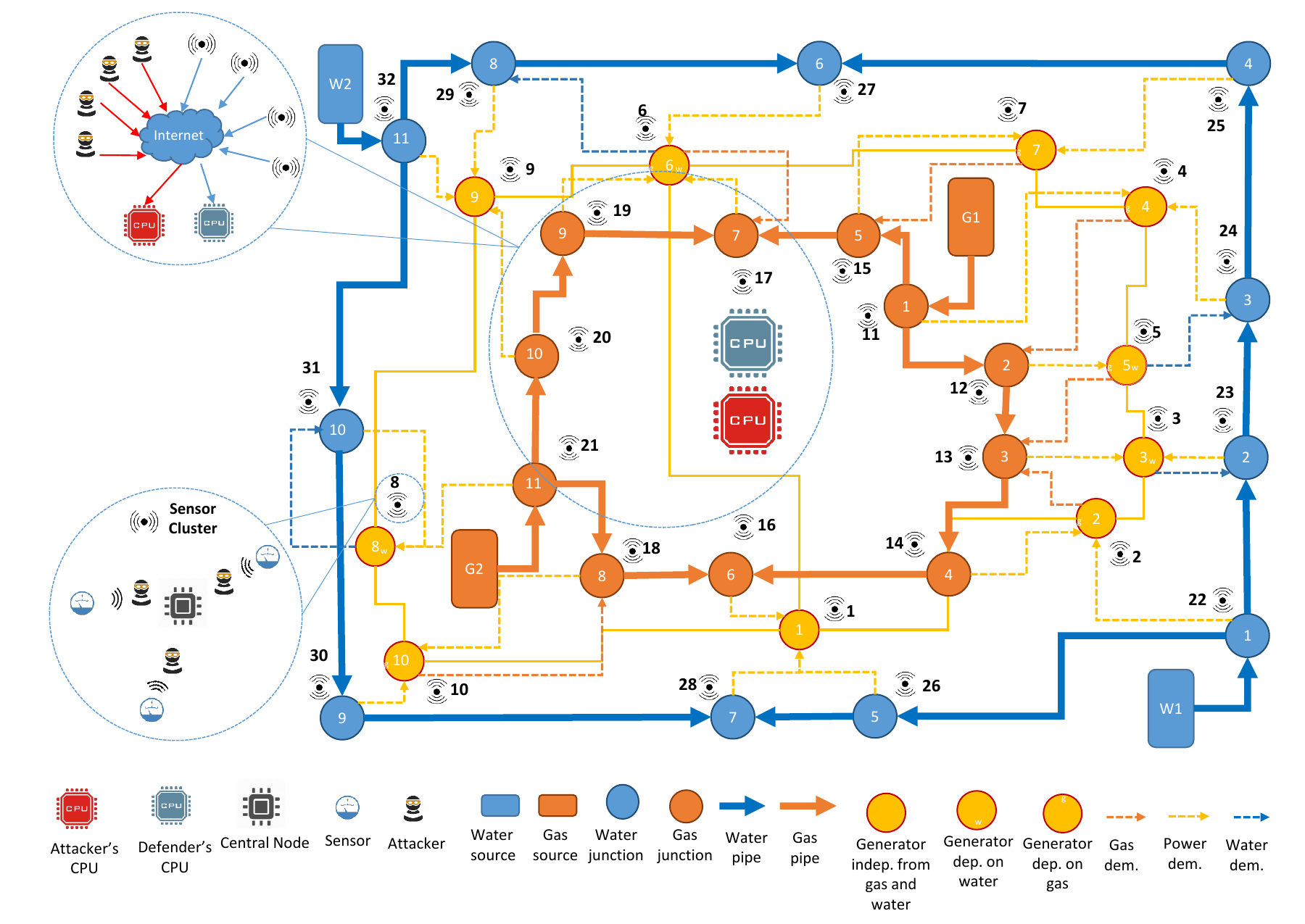}
		\vspace{-9mm}
		\caption{An illustrative example of an ICI.}
		\label{fig_ICI}
	\end{subfigure}
	~
	\begin{subfigure}{0.32\textwidth}
		\centering
		\captionsetup{singlelinecheck = false, justification=justified}
		\includegraphics[width=\columnwidth]{./Figures/interdependence}
		\caption{Interdependence between the CIs.}
		\label{fig_interdependence}
	\end{subfigure}
\vspace{-5mm}
\caption{The interdependence of power, natural gas, and water CIs and their state variables.}
\vspace{-5mm}
\end{figure*}
\begin{figure*}	
	\centering
	\begin{subfigure}{0.38\textwidth}
		\centering
		\captionsetup{singlelinecheck = false, justification=justified}
		\includegraphics[width=\columnwidth]{./Figures/percentageofCompromisedSAs}\vspace{-2mm}
		\caption{Sum of the compromised SC valuations.}
		\label{fig_percentageofCompromisedSCs}
		\vspace{-3mm}
	\end{subfigure}~
	\begin{subfigure}{0.38\textwidth}
		\centering
		\captionsetup{singlelinecheck = false, justification=justified}
		\includegraphics[width=\columnwidth]{./Figures/Rarange}\vspace{-2mm}
		\caption{Average sum of the compromised SC valuations.}
		\label{fig_RaRange}
		\vspace{-3mm}
	\end{subfigure}
	\caption{The outcome of the CBG with the strategies at the MSNE.}
	\vspace{-9mm}
\end{figure*}

In Fig. \ref{fig_percentageofCompromisedSCs}, we simulate the presented ICI in Fig. \ref{fig_ICI} for the case in which the ratio of the available resources is $ \frac{R^a}{R^d}=0.2 $. Over 50 simulation runs, both players, empirically, play their MSNE based on the MDFs in Proposition \ref{proposition1}. The average percentage of compromised SCs in this case is $ 11\%$ which is very close to the theoretical expected percentage of the compromised SCs which is $ \frac{R^a}{2R^d}=10\% $. Fig. \ref{fig_percentageofCompromisedSCs} shows that since the attacker and the defender randomize between their strategies in each simulation, the percentage of compromised SCs of each simulation might differ from the theoretical expected percentage of compromised SCs. However, the average percentage of compromised SCs for 50 simulations conforms to the theoretical prediction.

	\begin{figure*}
		\centering
		\begin{subfigure}{0.38\textwidth}
			\centering
			\captionsetup{singlelinecheck = false, justification=justified}
			\includegraphics[width=\textwidth]{./Figures/A_InterdependenceCompSAs}
			\vspace{-6mm}
			\caption{The average percentage of compromised SCs in ICI when the defender allocates resources to only one SC.}
			\label{fig:sub1}
			\vspace{-6mm}
		\end{subfigure}
		~
		\begin{subfigure}{0.38\textwidth}
			\centering
			\captionsetup{singlelinecheck = false, justification=justified}
			\includegraphics[width=\textwidth]{./Figures/B_InterdependenceCompSAs}
			\vspace{-6mm}
			\caption{The average CED caused by the cyber attack to ICI  when the defender allocates resources to only one SC.}
			\vspace{-6mm}
			\label{fig:sub2}
		\end{subfigure}
		\caption{The effect of interdependency on the security solution.}\vspace{-12mm}
	\end{figure*}

	To analyze the efficiency of the MSNE , we consider three baseline approaches: a) both players play their MSNE (which is the solution of the game), b) the attacker allocates a portion $ \phi_i $ of its resource on SC $ i $ while the defender plays \emph{best response}, and c) the defender allocates a portion $ \phi_i $ of its resource on SC $ i $ while the attacker plays a best response. The \emph{best response} of player $ j $ is the pure strategy which maximizes its utility function. Fig. \ref{fig_RaRange} shows that, when the defender allocates a portion $ \phi_i $ of its resource on SC $ i $ without mixing the allocation of resources, the attacker can predict the defender's strategy and {can get higher payoff}. In contrast, if the attacker allocates a portion $ \phi_i $ of its resources on SC $ i $ without playing a mixed strategy, then the defender can protect almost all the SCs since it knows the exact amount of attacker's allocated resource on each SC. Thus, the attacker is clearly better off randomizing using an MSNE. Therefore, Fig. \ref{fig_RaRange} shows that if the defender plays MSNE, it can protect the SCs at least $ 50 \% $ better than the case in which it allocates a portion $ \phi_i $ of its resource to every SC $ i $. {The $ 50\% $ improvement actually can be derived by comparing the slopes of lines passing through the simulation points in Fig. \ref{fig_RaRange}.} Moreover, Fig. \ref{fig_RaRange} shows that, as much the portion of the attacker's resources to the defender's resources increases, {the average sum of the compromised SC valuations} increases.  
	
	To analyze the interdependence between the three CIs, we simulate our model for cases in which the defender protects only one of the CIs. Fig. \ref{fig:sub1} shows the average percentage of compromised SCs. From Fig. \ref{fig:sub1}, we can see that the defender loses more SCs to the attacker by concentrating on the security of only one CI. Moreover, Fig. \ref{fig:sub2} shows the average CED caused by the cyber attack on the ICI. Fig. \ref{fig:sub2} indicates that the state estimation of each CI depends on the data of the SCs in other CIs. Therefore, when the defender focuses only on one CI to protect, the attacker can disturb this CI's state estimation by attacking the SCs in other CIs. The simulation results in Figs. \ref{fig:sub1} and \ref{fig:sub2} corroborate the theoretical results in Theorem \ref{Theoreminterdependence}, where we proved that the defender cannot protect the estimation error by allocating its resources only on one CI without considering the interdependence between the CIs.
	
	\begin{figure*}
	\centering
	\begin{subfigure}{0.44\textwidth}
		\centering
		\captionsetup{singlelinecheck = false, justification=justified}
		\includegraphics[width=0.9\columnwidth]{./Figures/kalmanestimation}\vspace{-2mm}
		\caption{The average estimation error caused by the attack to the ICI at MSNE for two cases: $ \frac{R^a}{R^d}=\frac{1}{20},\frac{10}{20} $.}
		\vspace{-3mm}
		\label{fig_Kalman}
	\end{subfigure}~
	\begin{subfigure}{0.44\textwidth}
		\centering
		\captionsetup{singlelinecheck = false, justification=justified}
		\includegraphics[width=0.9\columnwidth]{./Figures/interdependencekalmanestimation}\vspace{-2mm}
		\caption{The average estimation error of a gas state caused by the attack on the ICI when the defender protects the entire ICI VS. only the natural gas CI.}
		\label{fig_kalmaninter}
		\vspace{-3mm}
	\end{subfigure}
	\caption{The average estimation error when the ICI is attacked.}\vspace{-12mm}
\end{figure*}
	
	In Fig. \ref{fig_Kalman}, we simulate the use of a KF for the estimation of one of the ICI's state variables, $ \omega_6 $, in presence of the attack and in absence of the attack, for two values of $ \frac{R^a}{R^d} $. Fig. \ref{fig_Kalman} shows that the estimation of the state variable $ \omega_6 $ causes 10 times higher error in the case of $ \left(\frac{R^a}{R^d}\right)_1=\frac{10}{20} $ compared to the case of $ \left(\frac{R^a}{R^d}\right)_2=\frac{1}{20} $ when both players play their MSNE. To explain this result, from Proposition \ref{proposition1}, we know that the expected estimation error of the attack is $ \Pi^a=\frac{R^a}{2R^d}\sum_{i=1}^{N}\varphi_i $. Hence, the error caused by the attacker increases by a factor of $ \frac{\left(\frac{R^a}{R^d}\right)_1}{\left(\frac{R^a}{R^d}\right)_2}=10 $.
	
	In Fig. \ref{fig_kalmaninter}, we analyze the estimation error of one of the state variables of the pipeline between junction 2 and 3 of natural gas CI. In this simulation, we examine two games where in first case, we consider that the defender can allocate resources to all the SCs of ICI, while in the second case, we consider that the defender can only protect the SCs of the natural gas CI. Note that the ratio between the players' resources is $ \frac{R^a}{R^d}=0.25 $ in two cases. Fig. \ref{fig_kalmaninter} shows that, at MSNE although the defender protects only the SCs of the natural gas CI, the average estimation error of a natural gas CI state variable in this case is larger than the estimation error in the case of protecting all the SCs of ICI. Moreover, although the allocated resources on natural gas CI is increased but from Fig. \ref{fig_kalmaninter} we see that the average estimation error in natural gas state variable is increased with the factor of $ \frac{0.47}{0.18}=2.6 $. This difference between the error estimation of two cases is very close to the theoretical expected utility of the attacker in Theorem \ref{Theoreminterdependence}. From Theorem \ref{Theoreminterdependence}, we know that the ratio of the expected utility of the attacker in two cases is $ \frac{\left(\frac{R^a\kappa}{R^d}+1-\kappa\right)}{\frac{R^a}{R^d}} $. Here, $ \kappa=\frac{\sum_{i\in \mathcal{N}^g}\varphi_i}{\sum_{i=1}^{N}\varphi_i}=0.38 $. Therefore, the ratio of the expected utility of the attacker in two cases is $ 2.86 $. Fig. \ref{fig_kalmaninter} illustrates the interdependence of three CIs. To protect the states of only one of the CIs, the defender has to consider the interdependence of CIs, and allocate its resources on all the SCs of ICI not only the CI which it desires to protect. \vspace{-5mm}
	\section{conclusion}\label{sect:conc}\vspace{-2mm}
	In this paper, we have analyzed the problem of allocating limited protection resources on sensor areas (SCs) of an ICI using a game-theoretic approach. We have modeled the dynamic system of interdependent power, natural gas, and water infrastructure. We have analyzed the state estimation of the states of an ICI and the maximum reachability of estimation errors of different SCs of the ICI. We have considered a general model of protection of SCs of an ICI when the available resources of the defender and the attacker are limited. In particular, we have formulated the problem of allocating the limited resources of the defender and the attacker as a Colonel Blotto game. We have then derived the MSNE of the defender and the attacker in closed-form as a function of the values of SCs and the available resources for the attacker and the defender. The derived MSNE gives insights on the allocation of the resources on each SC and also underlines the interdependence of the three infrastructure. Simulation results verify that the derived MSNE is the defender's best strategy and due to the interdependence of three CIs, the defender must consider ICI as a unified system in the security analysis.\vspace{-2mm}
\begin{appendices}{\vspace{-4mm}
\section{Critical Infrastructure Dynamic Model}\label{ICIAppendix}\vspace{-3mm}
In this appendix, we derive a dynamic system model for the power, natural gas, and water CI. Moreover, we analyze the dynamic model of interdependence between these three CIs.\vspace{-4mm}
\subsection{Power Infrastructure}\vspace{-3mm}
To analyze the power system, we consider a synchronous generator connected to a transmission line as the study system. For a large-scale power system, consisting of $ n_e $ generators interconnected through a transmission network, the model derived in \cite{andersson2012dynamics} is used. In this model, each generator is considered as a subsystem, with the input to each subsystem $ i $ being the power demand from the connected bus, $ P_{e_i} $. Any changes in the power demand, $ P_{e_i}, $ will result in a change in the frequency of the generator and the mechanical input to the generator. The block diagram of two connected generators is shown in Fig. \ref{intergenerator}. In this model, the dynamics of each subsystem $ i $ can be written as follows:
\small
\begin{align}\label{eq:omega}
\dot{\omega}_i(t)&=-\frac{D_i}{J_i}\omega_i(t)+\frac{1}{J_i}\hspace{-0.1cm}\left(\hspace{-0.1cm}P_{m_i}(t)-P_{e_i}(t)-\hspace{-0.3cm}\sum_{j=1, j\neq i}^{n_e}P_{ij}\right)\hspace{-0.1cm}.
\end{align}
\begin{align}\label{eq:mechpower}\small
\dot{P}_{m_i}(t)=-\frac{1}{T_{t_i}}P_{m_i}(t)-\frac{1}{P^o_iT_{t_i}}\omega_i,
\end{align}
\begin{align}\label{eq:line}\small
\dot{P}_{ij}=P^o_{ij}(\omega_i-\omega_j),
\end{align}\normalsize
where \eqref{eq:omega}, \eqref{eq:mechpower}, and \eqref{eq:line} represent the mechanical, feedback, and line dynamics of each subsystem \cite{andersson2012dynamics} and $ P^o_{ij }$ is the linearized power flow in the constant voltage that can be derived as $
P^o_{ij}=\frac{E_iE_j}{x_{l_{ij}}}\cos(\delta^o_i-\delta^o_j).
$
\begin{figure}[!t] 
	\centering
	\begin{subfigure}{0.46\columnwidth}
    	\centering
    	\includegraphics[width=0.7\columnwidth]{./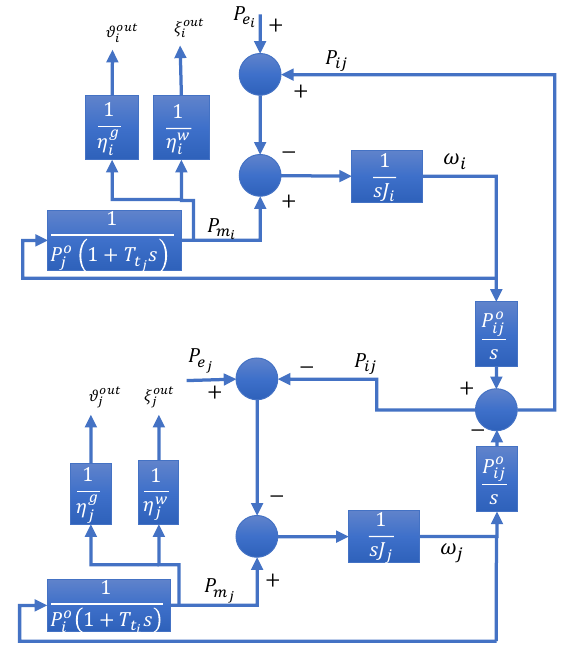}
    	\caption{Two connected generators.}
    	\label{intergenerator}
    \end{subfigure}
    \begin{subfigure}{0.5\columnwidth}
    \centering
    \begin{subfigure}{\columnwidth}
    	\centering
    	\includegraphics[width=0.7\columnwidth]{./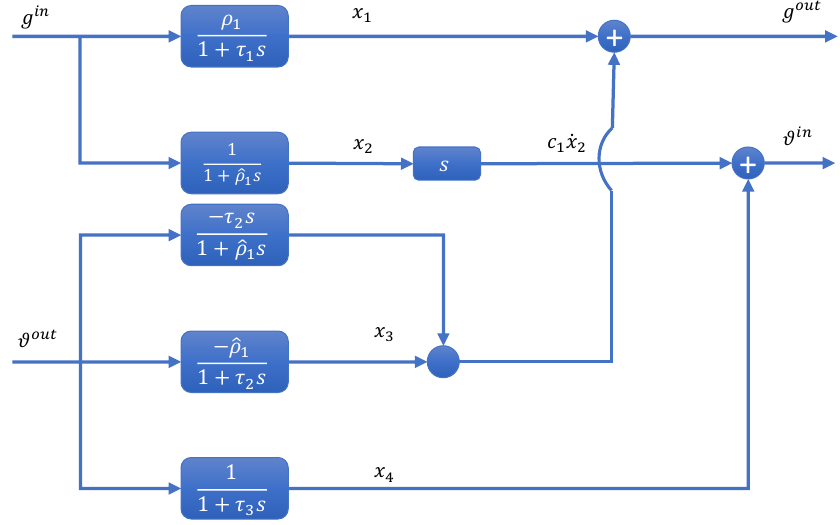}
    	\caption{Natural gas pipeline.}
    	\label{fig_gas}
    \end{subfigure}\hspace{5mm}
    \begin{subfigure}{\textwidth}
    	\centering
    	\includegraphics[width=0.7\columnwidth]{./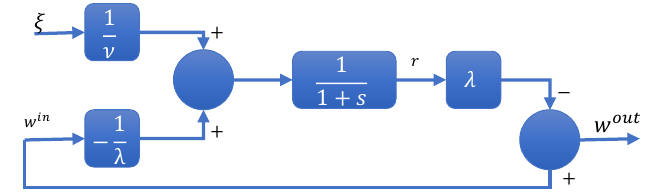}
    	\caption{Water pipeline.}
    	\label{waterpipe}
    \end{subfigure}
    \end{subfigure}
    \vspace{-2mm}
	\caption{Block diagram of CIs.}	\vspace{-12mm}
\end{figure}
\begin{table}[t!]
	\centering
	\begin{tabularx}{\columnwidth}{|c|X|c|X|}
		\hline Parameter & Description &	Parameter & Description\\\hline
		$ \delta^o_i $ & the $ i $-th generator's constant operational power angle in rad &$ T_{t_i} $& direct-axis transient time constant of mechanical power of $ i $-th generator\\\hline
			$ J_i , D_i $ & rotor inertia and damping factor of generator $ i $  &
		$ P_{m_i} $ & mechanical input power of the $ i $-th generator\\\hline
		$ P_{e_i} $ & active power demand in the subsystem $ i $ & $ \omega_i $ & angular frequency of the $ i $-th generator, in rad/s
		\\\hline
		$ x_{l_{ij}} $ & equivalent reactance of the bus between generators $ i $ and $ j $.&
		$ P_{ij} $&  Power flow in the line between generators $ i $ and $ j $\\\hline
		$ P^o_{ij} $ & linearized power flow in the constant voltage&
		$ E_i $  & constant voltage of generator $ i $\\\hline
		$ \boldsymbol{e}^g,\boldsymbol{e}^w $ & vector of all of the generator indexes supplied by natural gas and water&
		$ \theta $ & friction coefficient of water\\\hline
		$ n_{eg} $ & number of generators supplied by gas &
		$ n_{ew} $ & number of generators require water\\\hline
		$ \eta_i^g $ & the efficiency of transformation of gas fuel to mechanical power in the generator $ i $&
		$ n_l $ & number of lines connected between the subsystems\\\hline
		$ g^\textrm{in}_{ij} $ & the inlet gas pressure at connection of junction $ i $ to $ j $&
		$ g^\textrm{out}_{ij} $ &the outlet gas pressure at the connection of junction $ i $ to $ j $\\\hline
		$ g^\textrm{comp}_i $ & the pressure produced by the compressor in the junction $ i $&
		$ g^\textrm{def}_i $ & the predefined pressure in the junction $ i $\\\hline
		$\vartheta^\textrm{d}_i $ & the gas flow demand in junction $ i $&$ w^\textrm{out}_i $ & the outlet pressure of water at the pipeline $ i $\\\hline
		$ \vartheta^\textrm{out}_{ij} $ &the outlet gas flow rate at the connection of junction $ i $ to $ j $&
		$ \gamma_{ij} $ & The sector area of a pipeline between the junction $ i $ and $ j $ \\\hline
		$ n_i^\textrm{out} $ & Number of pipelines suppling junction $ i $ &
		$ n_g^\textrm{pipe} $ & Total number of the pipelines in the gas system\\\hline
		$ \eta_i^w $ &  a constant value based on a turbine's inner characteristics & 
		$ \vartheta^\textrm{in}_{ij} $ &the inlet gas flow rate at connection of junction $ i $ to $ j $
		\\\hline
		$ w^\textrm{in}_i $ & the inlet pressure of water at the pipeline $ i $&
		$ \xi_i $ & the flow rate of the water in pipeline $ i $ \\\hline
		$ r_i $ & friction of the water at pipeline $ i $ &
		$ \nu $ & the kinematic viscosity of water \\\hline
	\end{tabularx}\vspace{-4mm}
	\caption{ICI parameter description.}
	\label{tab_elecparam}\vspace{-12mm}
\end{table}
All the power infrastructure parameters are defined in Table \ref{tab_elecparam}. Using \eqref{eq:omega}, \eqref{eq:mechpower}, and \eqref{eq:line} we can summarize each power subsystem's state-space model as:
\begin{align}\label{elec:DE}\small
\dot{\boldsymbol{x}}^e_i(t)=\boldsymbol{A}^e_i\boldsymbol{x}^e_i(t)+\boldsymbol{B}^e_i\boldsymbol{u}^e_i(t)+\boldsymbol{H}^e_{i}\boldsymbol{x}^l, \,\, i=1,\dots,n_e
\end{align}\normalsize
where
$\small
\boldsymbol{x}^e_i=\left[\begin{array}{c}
x^e_{i1}\\
x^e_{i2}
\end{array}\right]=\left[\begin{array}{c}
\omega_i \\
P_{m_i}
\end{array}
\right],
\boldsymbol{u}^e_i = \left[
\begin{array}{c}
u^e_{i1}\\
\end{array}
\right]= \left[
\begin{array}{c}
P_{e_i}\\
\end{array}
\right],
\boldsymbol{A}^e_i=\left[\begin{array}{c : c }
-\frac{D_i}{J_i} & \frac{1}{J_i}\\\hdashline[2pt/2pt]\frac{-1}{P_i^oT_{t_i}} &\frac{-1}{T_{t_i}}
\end{array}\right],
\boldsymbol{B}^e_i=\left[
\begin{array}{c}
\frac{-1}{J_i}\\ 0
\end{array}
\right],
$\normalsize and $ \boldsymbol{x}^l$ is a vector with size $ n_l $ consisting all the line powers $ P_{ij} $. Also, $ \boldsymbol{H}_i^e $ is a row vector with entities:\small
\begin{align}
\boldsymbol{H}^e_{i}[m]=\begin{cases}
\frac{-1}{J_i}, & \boldsymbol{x}^l[m]=P_{l_{ij}} \textrm{ for } j=1,\dots,n_e , j\neq i \\
\frac{1}{J_i}, & \boldsymbol{x}^l[m]=P_{l_{ji}} \textrm{ for } j=1,\dots,n_e , j\neq i\\
0, & \textrm{otherwise},
\end{cases}
\end{align}\normalsize
where $ \boldsymbol{x}^l[m] $ is the $ m $-th entity of vector $ \boldsymbol{x}^l $. In \eqref{elec:DE}, the input to a subsystem, $ \boldsymbol{u}^e_i $, is the demand power from each generator, $ P_{e_i}  $. Any change in power demand results in the deviation of state variables. Since one of the state variables in \eqref{elec:DE} is the mechanical power input to the generator, $ P_{m_i} $, then any change in the demand power will result in the deviation of mechanical power input to the generator. However mechanical input to the generator is excited by an external input such as nuclear energy, coal, natural gas, wind, or water flow. Here, we focus on dynamics of the generators excited by the natural gas and for the other types of generators we assume a known mechanical input without any dynamics. We also assume that the mechanical input to the generators supplied by natural gas is proportional to the flow rate of arriving gas to the generator from the gas CI. Therefore, we have $ P_{m_i}=\eta_i^g \vartheta^\textrm{out}_i $,
where $ \eta_i^g $ and $ \vartheta^\textrm{out}_i $ are defined in Table \ref{tab_elecparam}.

Power generation also depends on some form of water input. For instance, thermal power generation requires large volumes of water for cooling purposes. Hydroelectric power requires flowing water to drive generating turbines \cite{lubega2014quantitative}. We model the dependence of the power generation on water using the boiler turbine dynamics in \cite{maffezzoni1997boiler}. Since the generator's temperature control requires water flow to the generator, the relationship between the input water and the generator's mechanical power can be written as:
$ P_{m_i}=\eta_i^w \xi_i^\textrm{out} $, where $ \eta_i^W $ and $ \xi_i^\textrm{out} $ are defined in Table \ref{tab_elecparam}. 

\eqref{elec:DE} considers each generator separately. However, to analyze the power infrastructure as a unified system we need to define a centralized model for the power system. In this regard, we present the state space model {as follows:\small
\begin{equation} \label{eq_powerCI}
\begin{aligned}
\dot{\boldsymbol{x}}^e(t)&=\boldsymbol{A}^e\boldsymbol{x}^e(t)+\boldsymbol{B}^e\boldsymbol{u}^e(t),\\
\boldsymbol{y}^{eg}(t)&=\boldsymbol{C}^{eg}\boldsymbol{x}^e(t),
\boldsymbol{y}^{ew}(t)=\boldsymbol{C}^{ew}\boldsymbol{x}^e(t),
\end{aligned}
\end{equation}}\normalsize
\small
\begin{align*}
\boldsymbol{x}^e=\left[ 
\begin{array}{c}
\boldsymbol{x}^e_1 \\
\vdots \\
\boldsymbol{x}^e_{n_e}\\
\boldsymbol{x}^l
\end{array}
\right],
\boldsymbol{u}^e=\left[
\begin{array}{c}
\boldsymbol{u}_1 \\
\vdots\\
\boldsymbol{u}_{n_e}
\end{array}
\right],
\boldsymbol{A}^e_{\left(2n_e+n_l\right)\times \left(2n_e+n_l\right)}=\left[
\begin{array}{c c c : c}
\boldsymbol{A}^e_1 & \boldsymbol{0} & \boldsymbol{0 } & \boldsymbol{H}^e_1 \\
\boldsymbol{0}& \ddots & \boldsymbol{0} & \vdots \\
\boldsymbol{0} & \boldsymbol{0} & \boldsymbol{A}^e_{n_e}& \boldsymbol{H}^e_{n_g}\\
\hdashline[2pt/2pt]
& \boldsymbol{A}^{el} &  & \boldsymbol{0}
\end{array}
\right],
\end{align*}
\begin{align*}
\boldsymbol{A}^{el}_{n_l\times 2n_e}\left[i,j\right]&=
\begin{cases}
P^o_{\frac{j}{2}m}, \hfil \textrm{if } \frac{j}{2}=1,\dots,n_e , \textrm{ \& } \boldsymbol{x}^l[i]=P_{\frac{j}{2}m},\\
-P^o_{\frac{j}{2}m}, \hfil \textrm{if } \frac{j}{2}=1,\dots,n_e , \textrm{ \& } \boldsymbol{x}^l[i]=P_{m\frac{j}{2}},\\
0, \hfil \textrm{otherwise},
\end{cases}
\end{align*}
\begin{align*}
\boldsymbol{B}^e_{{\left(2n_e+n_l\right)\times \left(n_e\right)}}=\left[
\begin{array}{c c c}
\boldsymbol{B}^e_1 &
\boldsymbol{0} &
\boldsymbol{0} \\
\boldsymbol{0} & \ddots & \boldsymbol{0} \\
\boldsymbol{0} & \boldsymbol{0} & \boldsymbol{B}^e_{n_e}\\
\boldsymbol{0} & \dots & \boldsymbol{0}
\end{array}
\right],
\boldsymbol{C}_{n_{eq}\times (2n_e+n_l)}^{eq}[i,j]=
\begin{cases}
\frac{1}{\eta_{j/2}^q} &\textrm{if } j/2=\boldsymbol{e}^q[i],\\
0, & \textrm{otherwise,}
\end{cases} \forall q \in \left\{g,w\right\}
\end{align*}\normalsize
and $ \boldsymbol{y}^{eg} $ and $ \boldsymbol{y}^{ew} $ are the vectors of natural gas and water demand from the power CI, respectively. $ \boldsymbol{e}^g $ and $ \boldsymbol{e}^w $ are the vectors which contain the indices of generators require natural gas and water.\vspace{-3mm} 
\subsection{Natural Gas Infrastructure}\vspace{-2mm} 
To analyze the performance of a natural gas CI, we need to model the transient flow of the gas pipelines, as in \cite{alamian2012state}. In this model, the inlet pressure and outlet flow rate at each pipeline are the inputs to the pipeline system. Also, the outlet gas pressure and the inlet flow rate are the outputs of the pipeline. The reason that outlet flow rate is the input of the pipeline system is that it determines the gas flow demand in the output of the system. Also, the pressure in the inlet of the pipelines is controlled by the components such as the compressor. Therefore, we consider the inlet pressure of the pipeline as another input to the pipeline system. The block diagram of a pipeline is shown in Fig. \ref{fig_gas} and the state space equations for a pipeline are derived as follows:
\begin{equation}\label{SSMgas}
\begin{aligned}
\dot{x}^g_1=\frac{-1}{\rho_1}x^g_1+\frac{\tau_1}{\rho_1}g^\textrm{in},\,\,\,
\dot{x}^g_2=\frac{-1}{\hat{\rho}_1}x^g_2 + \frac{1}{\hat{\rho}_1}g^\textrm{in},\,\,\,
\dot{x}^g_3=\frac{-1}{\hat{\rho}_1}x^g_3-\frac{\tau_2}{\hat{\rho}_1}\vartheta^\textrm{out},\,\,
\dot{x}^g_4=\frac{-1}{\tau_3}x^g_4+\frac{1}{\tau_3}\vartheta^\textrm{out},
\end{aligned}
\end{equation}
and the outputs of the pipeline are:
\begin{equation}\label{SSMgasout}
\begin{aligned}
g^\textrm{out}=x^g_1+\left(1-\frac{\hat{\tau}_1}{\hat{\rho}_1}\right)x^g_3-\frac{\hat{\tau}_1\tau_2}{\hat{\rho}_1}\vartheta^\textrm{out},
\vartheta^\textrm{in}=\frac{-\hat{\tau}_2}{\hat{\rho}_1}x^g_2+x^g_4-\frac{\hat{\tau}_2}{\hat{\rho}_1}g^\textrm{in}.
\end{aligned}
\end{equation}
The parameters $ \tau_1,\tau_2,\tau_3,\hat{\tau}_1, \rho_1$, and $ \hat{\rho}_1 $ are computed from the linearization of the pressure and gas flow rate inside the pipeline. For detailed analysis, the reader can refer to \cite[Annex A]{alamian2012state}. Other variables in the natural gas network are described in Table \ref{tab_elecparam}. 
In the natural gas CI, we need to account for the connection between the pipelines. To do so, first we consider that the summation of inlet gas flow in the junctions has to be equal to the summation of the outlet gas flow. Therefore, the inlet-outlet gas rate relationship in each junction $ i $ is $
\sum_{\left\{m|m\rightarrow i \right\}}\vartheta^\textrm{out}_{mi}=\sum_{\left\{j|i\rightarrow j \right\}}\vartheta^\textrm{in}_{ij}+\vartheta^\textrm{d}_i,$ where $
\vartheta^\textrm{out}_{mi}=\frac{\gamma_{{mi}_1}}{\sum_{\left\{q|q\rightarrow i \right\}}\gamma_{{qi}_1}}\left(\sum_{\left\{j|i\rightarrow j \right\}}\vartheta^\textrm{in}_{ij}+\vartheta^\textrm{d}_i\right),
$
and the notation $ j|j\rightarrow i $ means all the junctions supplying junction $ i $ and $ j|i\rightarrow j $ means all the junctions supplied by junction $ i $. Also, the inlet-outlet pressure relationship in each junction can be written as $
g_i^\textrm{out}\triangleq\frac{1}{n_i^\textrm{out}}\sum_{j|j\rightarrow i}g_{ji}^\textrm{out},\Rightarrow
g_i^\textrm{out}+g_i^\text{comp}=g_i^\textrm{def},\Rightarrow
g_{ij}^\textrm{in}=g_i^\textrm{def},  \forall j|i\rightarrow j.$
A compressor at each junction compensates the pressure loss \cite{pambour2016integrated}. The required power at each compressor $ i $ is a function of the pressure produced by the compressor as$P_{e_i}=\eta_i^{c}(g_i^\textrm{def}-g_i^\textrm{out}),$
where $ g_i^\textrm{def} $ is defined by the designer of the system based on the characteristics of the junction. Therefore, the gas pressure in the inlet of the pipelines is fixed. Now, we can write the state space model of a natural gas pipeline between junction $ i $ and $ j $ as:\small
\begin{align}
\dot{\boldsymbol{x}}^g_{ij}&=\boldsymbol{A}^g_{ij}\boldsymbol{x}^g_{ij}+\boldsymbol{B}^g_{ij}\left(\boldsymbol{u}^g_{ij}+\sum_{m|j\rightarrow m}^{n}\boldsymbol{y}^g_{jm}\right), \hfil i=1,\dots,n_g\nonumber\\\label{gas:DE}
\boldsymbol{y}^g_{ij}&=\boldsymbol{C}^g_{ij}\boldsymbol{x}_{ij}^g+\boldsymbol{D}^g_{ij}\left(\boldsymbol{u}^g_{ij}+\sum_{m|j\rightarrow m}^{n}\boldsymbol{y}^g_{jm}\right),
\boldsymbol{y}^{ge}_{i}=\boldsymbol{C}^{ge}_{i}\left(\boldsymbol{u}^{ge}_{i}-\boldsymbol{D}^{ge}_{i}\sum_{j|j\rightarrow i}^{n}\boldsymbol{y}^g_{ji}\right),
\end{align}\normalsize
where\small
\begin{align*}
\boldsymbol{x}^g_{ij}&=\left[\begin{array}{c}
x^g_{{ij}_1}\\
x^g_{{ij}_2}\\
x^g_{{ij}_3}\\
x^g_{{ij}_4}
\end{array}\right],\boldsymbol{u}^g_{ij} = \left[
\begin{array}{c}
u^g_{{ij}_1}\\
u^g_{{ij}_2}
\end{array}
\right]= \left[
\begin{array}{c}
g_i^\textrm{def}\\
\vartheta_j^\textrm{d}
\end{array}
\right],
\boldsymbol{y}^g_{ij}=\left[\begin{array}{c}
y^g_{{ij}_1}\\
y^g_{{ij}_2}
\end{array}\right]=\left[\begin{array}{c}
g_{ij}^\textrm{out}\\
\vartheta_{{ij}^\textrm{in}}
\end{array}\right],\boldsymbol{y}^{ge}_i=\left[\begin{array}{c}
P_{e_i}\\
\end{array}\right],\\\boldsymbol{C}^g_{ij}&=\left[
\begin{array}{c  c  c  c}
1 & 0 & \left(1-\frac{\hat{\tau}_{{ij}_1}}{\hat{\rho}_{{ij}_1}}\right) & 0\\
0 & \frac{-\hat{\tau}_{{ij}_2}}{\hat{\rho}_{{ij}_1}} &  0 & 1
\end{array}
\right],
\boldsymbol{C}_i^{ge}=\left[
\begin{array}{c}
\eta_i^c
\end{array}
\right],\boldsymbol{D}_i^{ge}=\left[
\begin{array}{c c}
\frac{1}{n_i^\textrm{out}} & 0
\end{array}
\right],\boldsymbol{u}_i^{ge}=\left[
\begin{array}{c}
g_i^\textrm{def}
\end{array}
\right],\\
\boldsymbol{A}^g_{ij}&=\left[\begin{array}{c  c  c  c}
\frac{-1}{\rho_{{ij}_1}}& 1 & 0 & 0\\
0 & \frac{-1}{\hat{\rho}_{{ij}_1}}& 0 & 0 \\
0 & 0 & \frac{-1}{\hat{\rho}_{{ij}_1} }& 0 \\ 
0 & 0 & 0 & \frac{-1}{\tau_{{ij}_1}}
\end{array}\right],
\boldsymbol{B}^g_{ij}=\left[
\begin{array}{c c}
\frac{\tau_{i1}}{\rho_{i1}} & 0 \\
\frac{1}{\hat{\rho}_{i1}} & 0\\
0& \frac{-\tau^2_{{ji}_1}}{\hat{\rho}_{{ij}_1}\sum_{\left\{j|j\rightarrow i \right\}}\tau_{{ji}_1}}\\
0 & \frac{\tau_{{ji}_1}}{\tau_{{ij}_3}\sum_{\left\{j|j\rightarrow i \right\}}\tau_{{ji}_1}}
\end{array}
\right]\hspace{-1mm},\boldsymbol{D}^g_{ij}\hspace{-1mm}=\hspace{-1mm}\left[
\begin{array}{c  c}
0 & \frac{\hat{\tau}_{{ij}_1}\tau_{{ij}_2}}{\hat{\rho}_{{ij}_1}}\\
\frac{\hat{\tau}_{{ij}_1}}{\hat{\rho}_{{ij}_1}}& 0
\end{array}
\right]\hspace{-1mm}.
\end{align*}\normalsize
and $ \boldsymbol{y}_i^{ge} $ is defined the power demand in the compressor of each junction $ i $. The relationship in \eqref{gas:DE} considers the dynamics of a single pipeline system. To analyze the natural gas CI as a unified system, we propose the centralized model consisting all of the state variables {as:\small
\begin{equation}\label{eq_gasCI}
\begin{aligned}
\dot{\boldsymbol{x}}^g(t)=\boldsymbol{A}^g\boldsymbol{x}^g(t)+\boldsymbol{B}^g\boldsymbol{u}^g(t),\,\,
\boldsymbol{y}^{ge}(t)=\boldsymbol{C}^{ge}\boldsymbol{x}^g(t)+\boldsymbol{D}^g\boldsymbol{u}^g(t),
\end{aligned} 
\end{equation}}\normalsize
 where\small
\begin{align}
\label{gasstatevec}
\boldsymbol{x}^{g}\hspace{-1mm}=\hspace{-1mm}\left[
\begin{array}{c}
\boldsymbol{x}^{g^T}_1,
\cdots,
\boldsymbol{x}^{g^T}_{n_g^\textrm{pipe}}
\end{array}
\right]^T,	\boldsymbol{u}^{g}\hspace{-1mm}=\hspace{-1mm} \left[
\begin{array}{c}
\boldsymbol{u}^{g^T}_1,
\cdots,
\boldsymbol{u}^{g^T}_{n_g^\textrm{pipe}}
\end{array}
\right]^T,\boldsymbol{y}^{ge}\hspace{-1mm}=\hspace{-1mm}\left[
\begin{array}{c}
\boldsymbol{y}^{ge^T}_1 ,
\cdots ,
\boldsymbol{y}^{ge^T}_{n_g}
\end{array}
\right]^T,
\end{align}\normalsize
{and $ \boldsymbol{A}^g $, $ \boldsymbol{B}^g $, $ \boldsymbol{C}^g $, and $ \boldsymbol{D}^g $ are $ 4n_g^{\textrm{pip}}\times 4n_g^{\textrm{pip}} $, $ 4n_g^{\textrm{pip}}\times 2n_g^{\textrm{pip}} $, $ 2n_g^{\textrm{pip}}\times4n_g^{\textrm{pip}} $, and $ 2n_g^{\textrm{pip}}\times2n_g^{\textrm{pip}}  $ matrices, respectively. }
In \eqref{gasstatevec}, each vector $ \boldsymbol{x}^g_m $ corresponds to one of the pipeline state vectors, $ \boldsymbol{x}^g_{ij}$. Moreover, each vector $ \boldsymbol{u}^g_m $ in \eqref{gasstatevec} corresponds to one of the pipeline input vectors, $ \boldsymbol{u}^g_{ij}$. To find matrices $ \boldsymbol{A}^g$, $ \boldsymbol{B}^g $, $ \boldsymbol{C}^g $, $ \boldsymbol{D}^g $, we need to start from the last pipelines in the natural gas CI and find the outputs of these pipelines and use them as the inputs of previous pipelines. By proceeding this method until reaching the source junction we can derive the mentioned matrices.\vspace{-5mm}
\subsection{Water Infrastructure}\vspace{-3mm}
To analyze the water CI we use the model presented in \cite{Burgschweiger2009}, in which the flow rate is considered constant in the outlet and inlet of each pipeline. Moreover, the relationship between the flow, inlet and outlet pressure of the pipeline can be expressed as $
\dot{r}=-r+\frac{1}{\nu}\xi-\frac{1}{\theta}w^\textrm{in},\,\,
w^\textrm{out}=-\theta r+w^\textrm{in}, $
where all the parameters are described in Table \ref{tab_elecparam}. Fig. \ref{waterpipe} shows the block diagram model of the pipeline in water system. Also at each junction of water system the summation of inlet flow has to be equal to the summation of the outlet flow of water. Therefore, at each junction $ i $ we have:
\begin{equation}\label{waterflow}\small
\begin{aligned}
\sum_{\left\{m|m\rightarrow i \right\}}\xi_{mi}=\sum_{\left\{j|i\rightarrow j \right\}}\xi_{ij}+\xi^\textrm{d}_i,\Rightarrow
\xi_{mi}=\frac{\gamma_{mi}}{\sum_{\left\{q|q\rightarrow i \right\}}\gamma_{qi}}\left(\sum_{\left\{j|i\rightarrow j \right\}}\xi_{ij}+\xi^\textrm{d}_i\right).
\end{aligned}
\end{equation}
Moreover, due to the pressure loss in the pipelines, there exist a water pump in each junction which controls the pressure of water in the junctions. Then, the water pressure relationship in each junction is $
w_i^\textrm{out}\triangleq\frac{1}{n_i^\textrm{out}}\sum_{j|j\rightarrow i}w_{ji}^\textrm{out},\Rightarrow\nonumber
w_i^\textrm{out}+w_i^\text{pump}=w_i^\textrm{def},\label{waterpressure}\Rightarrow
w_{ij}^\textrm{in}=w_i^\textrm{def}, \quad \forall j|i\rightarrow j.$
Each junction $ i $'s required power for the pump is a function of pressure provided by the pump as $
P_{e_i}=\eta_i^{p}(w_i^\textrm{def}-w_i^\textrm{out}),$ where $ w_i^\textrm{def} $ is a predefined value by the designer of the water system which indicates the fixed inlet water pressure in each pipe. Similar to the natural gas CI, the state space model for each water pipeline is as follows:\small
\begin{align}
\dot{\boldsymbol{x}}^w_{ij}&=\boldsymbol{A}^w_{ij}\boldsymbol{x}^w_{ij}+\boldsymbol{B}^w_{ij}\left(\boldsymbol{u}^w_{ij}+\sum_{m|j\rightarrow m}^{n}\boldsymbol{y}^w_{jm}\right), \hfil i=1,\dots,n_w\nonumber\\\label{water:DE}
\boldsymbol{y}^w_{ij}&=\boldsymbol{C}^w_{ij}\boldsymbol{x}_{ij}^w+\boldsymbol{D}^w_{ij}\left(\boldsymbol{u}^w_{ij}+\sum_{m|j\rightarrow m}^{n}\boldsymbol{y}^w_{jm}\right), \quad
\boldsymbol{y}^{we}_{i}=\boldsymbol{C}^{we}_{i}\left(\boldsymbol{u}^{we}_{i}-\boldsymbol{D}^{we}_{i}\sum_{j|j\rightarrow i}^{n}\boldsymbol{y}^w_{ji}\right),
\end{align}\normalsize
where\small
\begin{align*}
\boldsymbol{x}^w_{ij}&=\left[\begin{array}{c}
r_{ij}
\end{array}\right],
\boldsymbol{u}^w_{ij} = \left[
\begin{array}{c}
w_i^\textrm{def}\\
\xi_{j}^d
\end{array}
\right],
\boldsymbol{y}^w_{ij}=\left[\begin{array}{c}
r_{ij}
\end{array}\right],
\boldsymbol{A}^w_{ij}=\left[\begin{array}{c}
-1
\end{array}\right],\boldsymbol{B}^g_{ij}=\left[
\begin{array}{c c}
\frac{1}{\nu_{ij}} & \frac{-1}{\theta_{ij}} \\
\end{array}
\right],\\
\boldsymbol{C}^w_{ij}\hspace{-1mm}&=\hspace{-1mm}\left[
\begin{array}{c}
-\theta_{ij}\\
0
\end{array}
\right],
\boldsymbol{D}^w_{ij}\hspace{-1mm}=\hspace{-1mm}\left[
\begin{array}{c c}
1 & 0\\
0 & 0
\end{array}
\right],
\boldsymbol{y}^{we}_i\hspace{-1mm}=\hspace{-1mm}\left[\begin{array}{c}
P_{e_i}\\
\end{array}\right],
\boldsymbol{C}_i^{we}\hspace{-1mm}=\hspace{-1mm}\left[
\begin{array}{c}
\eta_i^p
\end{array}
\right],
\boldsymbol{D}_i^{we}\hspace{-1mm}=\hspace{-1mm}\left[
\begin{array}{c c}
\frac{1}{n_i^\textrm{out}} & 0
\end{array}
\right],
\boldsymbol{u}_i^{we}\hspace{-1mm}=\hspace{-1mm}\left[
\begin{array}{c}
w_i^\textrm{def}
\end{array}
\right].
\end{align*}\normalsize
Vector $ \boldsymbol{y}_i^{we} $ identifies the power demand from the pumps in each junction. The system presented in \eqref{water:DE} is a decentralized model which considers only one pipeline. Using \eqref{waterflow} and \eqref{water:DE}, for each pipeline in the water system we can find the centralized water CI model as follows:
{\begin{equation}\label{eq_waterCI}
\begin{aligned}
\dot{\boldsymbol{x}}^w(t)=\boldsymbol{A}^w\boldsymbol{x}^w(t)+\boldsymbol{B}^w\boldsymbol{u}^w(t),\quad
\boldsymbol{y}^{we}(t)=\boldsymbol{C}^{we}\boldsymbol{x}^w(t)+\boldsymbol{D}^w\boldsymbol{u}^w(t),
\end{aligned} 
\end{equation}} 
where \small $
\boldsymbol{x}^w\hspace{-1mm}=\hspace{-1mm}\left[
\begin{array}{c}
\boldsymbol{x}^w_1,
\cdots,
\boldsymbol{x}^w_{n_w^\textrm{pipe}}
\end{array}
\right]^T, 
\boldsymbol{u}^w\hspace{-1mm} = \hspace{-1mm}\left[
\begin{array}{c}
\boldsymbol{u}^w_1,
\cdots,
\boldsymbol{u}^w_{n_w^\textrm{pipe}}
\end{array}
\right]^T,
\boldsymbol{y}^{we}\hspace{-1mm}=\hspace{-1mm}\left[
\begin{array}{c}
\boldsymbol{y}^{we}_1 ,
\cdots ,
\boldsymbol{y}^{we}_{n_w^{\textrm{pipe}}}
\end{array}
\right]^T,$ \normalsize
and each $ \boldsymbol{x}^w_m $ corresponds to one of the pipeline state vectors, $ \boldsymbol{x}^w_{ij}$, and each $ \boldsymbol{u}^w_m $ corresponds to one of the pipeline input vectors, $ \boldsymbol{u}^w_{ij}$. {Also, $ \boldsymbol{A}^w $, $ \boldsymbol{B}^w $, $ \boldsymbol{C}^w $, and $ \boldsymbol{D}^w $ are $ n_w^{\textrm{pipe}}\times n_w^{\textrm{pipe}} $, $ n_w^{\textrm{pipe}}\times 2n_w^{\textrm{pip}} $, $ 2n_w^{\textrm{pipe}}\times n_w^{\textrm{pipe}} $, and $ 2n_w^{\textrm{pipe}}\times 2n_w^{\textrm{pipe}} $ matrices, respectively,} and to find these matrices the same procedure in the natural gas CI model can be proceeded. We start from the junctions that only have the demand outlet flow without any pipelines connected to their outlet. Then, we use each pipeline's output as inputs to the pipelines of previous step. We continue this procedure until we reach the source.\vspace{-5mm}
\subsection{ICI Model}\vspace{-3mm}
{
Considering the power-gas-water CI interdependence, a unified model for each CI is:\small
\begin{align}
\dot{\boldsymbol{x}}^e(t)&=\boldsymbol{A}^e\boldsymbol{x}^e(t)+\boldsymbol{B}^e\left(\boldsymbol{u}^e(t)+\boldsymbol{T}^{ge}\boldsymbol{y}^{ge}(t)+\boldsymbol{T}^{we}\boldsymbol{y}^{we}(t)\right),\nonumber\\
\dot{\boldsymbol{x}}^g(t)&=\boldsymbol{A}^g\boldsymbol{x}^g(t)+\boldsymbol{B}^g\left(\boldsymbol{u}^g(t)+\boldsymbol{T}^{eg}\boldsymbol{y}^{eg}(t)\right),\label{interCIstate}\quad
\dot{\boldsymbol{x}}^w(t)=\boldsymbol{A}^w\boldsymbol{x}^w(t)+\boldsymbol{B}^w\left(\boldsymbol{u}^w(t)+\boldsymbol{T}^{ew}\boldsymbol{y}^{ew}(t)\right),
\end{align}\normalsize
where \small
\begin{align}\label{interCIout}
\begin{aligned}
\boldsymbol{y}^{eg}(t)&=\boldsymbol{C}^{eg}\boldsymbol{x}^e(t),
\boldsymbol{y}^{ew}(t)=\boldsymbol{C}^{ew}\boldsymbol{x}^e(t),\\
\boldsymbol{y}^{ge}(t)&=\boldsymbol{C}^{ge}\boldsymbol{x}^g(t)+\boldsymbol{D}^g\left(\boldsymbol{u}^g(t)+\boldsymbol{T}^{eg}\boldsymbol{y}^{eg}(t)\right),\,\,
\boldsymbol{y}^{we}(t)=\boldsymbol{C}^{we}\boldsymbol{x}^w(t)+\boldsymbol{D}^w\left(\boldsymbol{u}^w(t)+\boldsymbol{T}^{ew}\boldsymbol{y}^{ew}(t)\right).
\end{aligned}
\end{align}\normalsize
Here, $ \boldsymbol{T}^{ge} $, $ \boldsymbol{T}^{we} $, $ \boldsymbol{T}^{ew} $, and $ \boldsymbol{T}^{eg} $ are $ 4n_g^{\textrm{pipe}}\times (2n_e+n_l) $, $ n_w^{\textrm{pipe}}\times (2n_e+n_l) $, $ {(2n_e+n_l)\times n_w^\textrm{pipe}} $, and $ {(2n_e+n_l)\times 4n_g^\textrm{pipe}} $ matrices connecting the inputs and outputs of the three CIs whose elements are equal to one if the output of one CI is connected to the input of another CI or is equal to zero otherwise. By substituting \eqref{interCIout} into \eqref{interCIstate}, we will have the following state-space model for the interdependent critical gas-power-water infrastructure as in \eqref{eq:interCI},
where
\begin{align}\label{eq:States}
\boldsymbol{x}(t)&=\left[
\begin{array}{c}
\boldsymbol{x}^{e^T}(t),
\boldsymbol{x}^{g^T}(t),
\boldsymbol{x}^{w^T}(t)
\end{array}
\right]^T,\boldsymbol{u}(t)=\left[
\begin{array}{c}
\boldsymbol{u}^{e^T}(t),
\boldsymbol{u}^{g^T}(t),
\boldsymbol{u}^{w^T}(t)
\end{array}
\right]^T,
\end{align} 
and $ \bar{\boldsymbol{A}} $ and $ \bar{\boldsymbol{B}} $ are $ n\times n $ and $ {n\times \tilde{n}} $ matrices defined in \eqref{matrixAdef} where $ n\triangleq 2n_e+n_l+4n_g^{\textrm{pipe}}+n_w^{\textrm{pipe}} $, and $ \tilde{n} \triangleq n_e + 2n_g^{\textrm{pipe}} + 2n_w^{\textrm{pipe}}$. \eqref{eq:interCI} captures the dynamics of an ICI. In this model, the state variables of three CIs are mutually interdependent, and changes in one CI can affect the other two CIs. }

{\small
	\begin{equation}\label{matrixAdef}
	\scriptstyle\bar{\boldsymbol{A}}\triangleq \left[
	\begin{array}{c : c : c}
	{\scriptstyle\boldsymbol{A}^e+\boldsymbol{B}^e\left(\boldsymbol{T}^{ge}\boldsymbol{D}^g\boldsymbol{T}^{eg}\boldsymbol{C}^{eg}+\boldsymbol{T}^{we}\boldsymbol{D}^w\boldsymbol{T}^{ew}\boldsymbol{C}^{ew}\right)} &{\scriptstyle \boldsymbol{B}^e\boldsymbol{T}^{ge}\boldsymbol{C}^{ge} }&{\scriptstyle \boldsymbol{B}^e\boldsymbol{T}^{we}\boldsymbol{C}^{we}}\\\hdashline[2pt/2pt]
	{\scriptstyle\boldsymbol{B}^g\boldsymbol{T}^{eg}\boldsymbol{C}^{eg}}&{\scriptstyle \boldsymbol{A}^g}&{\scriptstyle\boldsymbol{0}}\\\hdashline[2pt/2pt]
	{\scriptstyle\boldsymbol{B}^w\boldsymbol{T}^{ew}\boldsymbol{C}^{ew}}&{\scriptstyle\boldsymbol{0}}& {\scriptstyle\boldsymbol{A}^w}
	\end{array}
	\right],
	\scriptstyle\bar{\boldsymbol{B}}\triangleq \left[
	\begin{array}{c : c : c}
	\scriptstyle\boldsymbol{B}^e &\scriptstyle \boldsymbol{B}^e\boldsymbol{T}^{ge}\boldsymbol{D}^g &\scriptstyle \boldsymbol{B}^e\boldsymbol{T}^{we}\boldsymbol{D}^w\\\hdashline[2pt/2pt]\scriptstyle
	\boldsymbol{0}&\scriptstyle\boldsymbol{B}^g&\scriptstyle\boldsymbol{0}\\\hdashline[2pt/2pt]\scriptstyle
	\boldsymbol{0}&\scriptstyle\boldsymbol{0}& \scriptstyle\boldsymbol{B}^w
	\end{array}
	\right].
	\end{equation}
	\vspace{-5mm}}

In summary, we proposed the state space modeling for each of the CIs and analyzed their interdependence. Moreover, we derived the matrices of dynamic model of each CI.\vspace{-4mm}}
\end{appendices}
\def\baselinestretch{0.83}
\bibliographystyle{IEEEtran}
\bibliography{references}\vspace{-3mm}
\end{document}